\renewcommand{\cite}[1]{\citet{#1}}
\newtheorem{theorem}{Theorem}[section]
\newcommand{\cE}{{\cal E}}
\newtheorem{thm}{Theorem}
\newtheorem{pro}{Proposition}
\newtheorem{cor}{Corollary}
\newtheorem{lem}{Lemma}
\newtheorem{defi}{Definition}
\newtheorem{ex}{Example}
\newtheorem{as}{Assumption}
\newcommand{\EE}{\mathbb{E}}
\newcommand{\PP}{{\mathbb{P}}}
\renewcommand{\P}{\PP}
\newcommand{\argmax}[1]{ {\arg\max}_{#1} }
\newcommand{\qed}{\mbox{ }~\hfill~$\Box$ \vspace{1ex} }
\newenvironment{proof}{\noindent{\sc Proof: }}{ \qed }
\newenvironment{proof2}[1]{\noindent{\sc Proof of Theorem #1: }}{ \qed }
\newenvironment{proof3}[1]{\noindent{\sc Proof of Lemma #1: }}{ \qed }
\newenvironment{proof4}[1]{\noindent{\sc Proof of Proposition #1: }}{ \qed }
\newcommand{\rmII}{\text{\it I\kern-.08em I\,}}
\newcommand{\rmIII}{\text{\it I\kern-.08em I\kern-.08em I\,}}
\newcommand{\rmIV}{\text{\it I\kern-.08em V\,}}
\begin{document}

%%=============================================
\title{Knight--Walras Equilibria}
%%=============================================
\author{  \!Patrick Bei\ss ner\thanks{  Financial Support  through the Research group ``Robust Finance" at the Center for Interdisciplinary Research (ZiF) at Bielefeld university  is gratefully acknowledged.}
\\
\small  Research School of Economics\\[-2pt] \small
The Australian National University \\
\small
patrick.beissner@anu.edu.au
  \and  \:\:\:Frank Riedel\thanks{Frank Riedel is also affiliated with the Faculty   of Economic and Financial Sciences, University of Johannesburg, South Africa.  Support  through DFG Grant Ri 1128--7--1 is gratefully acknowledged.}
\\
\small   Center for Mathematical Economics\\[-2pt] \small
Bielefeld University \\
\small
frank.riedel@uni-bielefeld.de }
\date{First Version: May 14, 2016}
\maketitle
%\newpage

%%=============================================
\begin{abstract}
%%=============================================
Knightian uncertainty leads naturally to nonlinear expectations. We introduce a corresponding  equilibrium concept  with sublinear prices and  establish  their existence. In general, such equilibria lead to Pareto inefficiency and coincide with Arrow--Debreu equilibria only   if the values of net trades are ambiguity--free in the mean. Without aggregate uncertainty, inefficiencies arise generically.
 We introduce a constrained efficiency concept,  \emph{uncertainty--neutral efficiency} and show that Knight--Walras equilibrium allocations are efficient in this constrained sense. Arrow--Debreu equilibria turn out to be non--robust with respect to the introduction of Knightian uncertainty. 
\end{abstract}

\medskip
{\footnotesize{ \it Key words and phrases:} Knightian Uncertainty, Ambiguity, General Equilibrium \\
{\it \hspace*{0.6cm} JEL subject classification: D81, C61, G11} }

\newpage
%{\it AMS 1991 subject classification.} Primary 90A10; secondary
%60H30}
%\renewcommand{\baselinestretch}{1.2}

%%=============================================
\section{Introduction}
%%=============================================

Knightian (or model) uncertainty describes the situation in which  the probability distribution of relevant outcomes is not known exactly.  We consider markets where Knightian uncertainty is described by a set of probability distributions. In such a situation, it is natural to work with a nonadditive notion of  expectation derived from the set of probability distributions. We introduce  here a corresponding equilibrium concept, \emph{Knight--Walras equilibrium},  where the forward price  of a contingent consumption plan is given by the maximal expected value of the net consumption value.

In a first step, we establish existence of Knight--Walras equilibrium for general preferences including the well studied classes of smooth ambiguity preferences and  variational preferences.
The proof extends Debreu's game--theoretic existence proof  in an interesting way. Debreu works with a Walrasian player who maximizes the expected value of aggregate excess demand. In our proof, we introduce a further \emph{Knightian} player who chooses the worst probability distribution. Under Knightian uncertainty, we can thus view the ''invisible hand'' as consisting of two auctioneers where one of them chooses the (state) price and the other one the ''relevant'' probability distribution.

In case of pure risk, i.e. when the set of probability distributions consists of a singleton, the new notion coincides with the classic notion of an Arrow--Debreu equilibrium under risk. A main objective of our paper is to study the differences to Arrow--Debreu equilibrium which are created by Knightian uncertainty in prices. 

In a first step, we ask under what conditions Arrow--Debreu and Knight--Walras equilibria coincide. Generalizing the case of pure risk, we show that this holds true if and only if the value of net demands are \emph{ambiguity--free in mean}; in this case, the expected value of all net demands is the same under all probability distributions. 

We then ask how restrictive this condition is. To this end, we study the well--known class of economies without aggregate uncertainty and pessimistic agents. It is well known that agents obtain (efficient) full insurance allocations in this case (\cite{billot2000sharing}). We show that generically in endowments, these Arrow--Debreu equilibria are not Knight--Walras equilibria. Intuitively, it will be rarely the case that agent's net demand is ambiguity--free in mean when individual endowments are subject to Knightian uncertainty. 

One can thus not expect efficiency under Knightian uncertainty. We then study a restricted notion of efficiency which we call \emph{Uncertainty--Neutral Efficiency} which is related to the space of ambiguity--free contingent plans. We show that Knight--Walras equilibrium allocations are uncertainty--neutral efficient.

We then continue to explore the nature of Knight--Walras equilibria in economies without aggregate uncertainty. It turns out that Arrow--Debreu equilibria are not robust with respect to the introduction of Knightian uncertainty in prices. Even with a small amount of Knightian uncertainty, the unique Knight--Walras equilibrium has no trade, in sharp contrast to the full insurance allocation of the Arrow--Debreu equilibrium.

 A discussion on the present type of nonlinear  forward   prices  can be found in \cite{araujo2012pricing} and \cite{beissner2012coherent} where no related questions of equilibrium are addressed.
 Sublinear prices in general equilibrium are studied in \cite{aty00}.
 More recently,  \cite{richter2015back} consider convex equilibria.

The rest of the paper is  organized as follows. Section 2 introduces the concept of Knight--Walras equilibria. Section 3 establishes existence of  equilibria. 	 Section 4  analyzes the relation  to Arrow--Debreu equilibria. Section 5 introduces  and discusses  the alternative Pareto criterion.  Section 6 investigates the equilibrium correspondence when the amount of Knightian uncertainty is a variable.  Section 7 concludes. The Appendix collects  proofs.

%%=============================================
%\newpage
%%==============================================
\section{Knight--Walras Equilibrium}
%%===============================================

\subsection{Expectations and Forward Prices}
We consider a static economy under uncertainty with a finite state space $\Omega$.
In risky environments, or for probabilistically sophisticated agents, expectations are given by probability measures on $\Omega$; under Knightian uncertainty, one is naturally led to sublinear expectations. Let us fix our notion of expectation first.

  Let $\mathbb X=  {\mathbb R}^\Omega$ be the commodity space of contingent plans for our economy.  We call  $\mathbb E : \mathbb X \to \mathbb R$
  a \textit{(Knightian) expectation} if it satisfies the following properties:
  \begin{enumerate}
    \item $\mathbb E$ preserves constants:\quad\:$\mathbb E c = c $ \quad\qquad\qquad\quad \!for all $c \in \mathbb R$,\\[-1.5em]
    \item $\mathbb E$  is  monotone: \quad \qquad \:\: \:$\mathbb  E X \le \EE Y$\quad\quad \quad \qquad \!\!for all  $X, Y \in \mathbb X$, $ X \le Y$,\\[-1.5em]
%(in the sense of the lattice order in $\mathbb H$)
        \item $\mathbb E$  is sub-additive:\quad \qquad\:$\EE [X+Y] \le \EE X + \EE Y \quad \textnormal{ for all } X, Y \in \mathbb X$,\\[-1.5em]
        \item  $\mathbb E$ is homogeneous: \quad\:\:$
\quad\!\EE\lambda X = \lambda \EE X \qquad \quad\:\: \textnormal{for }\lambda>0  \textnormal{ and }X \in \mathbb X$,\\[-1.5em]
\item $\mathbb E$ is relevant%\footnote{\frank{Relevance is economically natural.....///////}}
: \qquad \qquad  \:\:\:$ \mathbb E[ -X] < 0$ \quad\quad \qquad for all $X\in \mathbb{X}_+\setminus \{0\}. $ \\[-1.5em]

  \end{enumerate}

It is well known\footnote{See Lemma 3.5 in \cite{gisc89}, \cite{peng2004nonlinear}, \cite{ar99}, or \cite{follmer2011stochastic}; alternatively, in Robust Statistics, see \cite{huber2011robust}. } that $\mathbb E$ is uniquely characterized by a convex and compact set $\mathbb P$ of probability measures on $\Omega$ such that
  \begin{eqnarray} \label{SL}
\mathbb E X = \max_{P\in\mathbb P} E^P X
 \end{eqnarray}
 for all $X \in\mathbb X$; $E^P$ denotes the usual linear expectation here. Due to the relevance of $\mathbb{E}$, the representing set  
 $\mathbb P$ in \eqref{SL} has full support in the sense that for every $P \in\mathbb P$  we have $P(\omega)>0$ for every $\omega\in\Omega$.

%{ \color{red}ZU DISKUTIEREN: Manchmal nützlich, aber nicht immer...}

%{\color{blue}FLASCHENHALS bei (B)   step 6 in the proof of  Theorem 4 RadNyk-argument for feasibility of fixed point allocation}

%}

The sublinear expectation $\mathbb E$ leads naturally to a concept of (forward) price for contingent plans: let $\psi: \Omega \to \mathbb R_+$ be a positive state--price. The forward price for a contingent plan  $X \in\mathbb X$ is 
$$\Psi(X)=\mathbb E \psi X\,,$$ in analogy to the usual forward (or risk--adjusted or equivalent martingale measure) price under risk. We call $\Psi:\mathbb{X}\rightarrow\mathbb{R}$ a coherent price system.

\subsection{The Economy with Sublinear Forward Prices}

We now introduce an otherwise classic economy with sublinear forward prices.

\begin{defi}
An \textnormal{Knightian economy} (on $\Omega$) is a triple
$$\cE=\left( I, \left(e^i,U^i \right)_{i\in I}, \mathbb E \right)$$
where
 $I\ge 1$ denotes the number of agents,
  $e^i \in \mathbb X_{+}=\left\{c \in \mathbb X : c(\omega) \ge 0 \, \textit{ for all } \omega \in \Omega \right\}$ is the endowment of agent $i$,    $U^i : \mathbb X_+ \to \mathbb R$ agent $i$'s utility function, and $\mathbb E$ is a Knightian expectation.
\end{defi}

As we fix the agents $\mathbb{I}=\{1,\ldots, I\}$ throughout the paper, we will sometimes use the shorthand notation $\cE^{\EE}$ or $\cE^{\mathbb P}$ to emphasize the dependence of the economy on the Knightian expectation $\EE$ or its set of priors $\mathbb P$.

The following example list some natural   utility functions in $\mathcal{E}$ which have been axiomatized in the literature.
\begin{ex}\label{ExUtility}
\begin{enumerate}
\item Multiple prior expected utilities (\cite{gisc89}) take  the form
\begin{equation}\label{DefMultPriorUt}
U^i(c)=-\mathbb{E}[-u^i(c)] = \min_{P\in\mathbb P} E^P u^i(c)
\end{equation}
 for a suitable (continuous, strictly increasing, strictly concave) Bernoulli utility function $u^i : \mathbb R_+ \to \mathbb R$.

Interpreting  $\mathbb P$ as the objectively given information about Knightian uncertainty,  we can also allow for subjective reactions to such Knightian uncertainty in the spirit of \cite{gajdos2008attitude}
$$U^i(c)= \min_{P\in\phi^i(\mathbb P)} E^P u^i(c)$$ for a selection $\phi^i(\mathbb P) \subset \mathbb P$. Note that a singleton $\phi^i(\mathbb P)=\{P_0\}$ leads to ambiguity--neutral, or subjective expected utility agents. Our model thus includes heterogeneous expectations among agents as a special case.
\item The smooth model of \cite{klibanoff2005smooth}
 has
$$U^i(c)= \int_{\mathbb P} \phi^i\left( E^P u^i(c) \right) \mu(\textnormal{d}P)$$ for a continuous, monotone, strictly concave ambiguity index $\phi^i : \mathbb R \to \mathbb R$ and a second--order prior $\mu$, a measure on $\mathbb P$.
\item  \cite{dana2013intertemporal} introduce  anchored preferences of the form
$$
U^i(c)= \min_{P\in\mathbb P} E^P [u^i(c)-u(e^i)]
$$
to study economies with incomplete preferences. These preferences have recently been axiomatized by \cite{faro2015variational}. Note that we do not treat incomplete preferences here.
\end{enumerate}

\end{ex}

The following equilibrium concept deviates from an Arrow--Debreu equilibrium by the sublinearity in the pricing expectation.
\begin{defi}
 We call a pair $(\psi, c)$ of a state--price $\psi : \Omega \to \mathbb R_+$ and an allocation $c=(c^i)_{i=1,\ldots,I} \in \mathbb X_+^I$ a \emph{Knight--Walras equilibrium} if
\begin{enumerate}
\item  the allocation $c$ is feasible, i.e. $\sum_{i=1}^I (c^i-e^i) \le 0$ 
\item for each agent $i$, $c^i$ is optimal in the  \textnormal{Knight-Walras} budget set
  \begin{eqnarray} \label{budget}
\mathbb B(\psi,e^i)= \left\{c \in \mathbb X_+ : \mathbb E \psi (c-e^i) \le 0 \right\}\,,
  \end{eqnarray}
i.e. if $U^i(d)>U^i(c^i)$  then  $d \notin \mathbb B(\psi,e^i)$. 
\end{enumerate}
\end{defi}

We discuss some immediate properties of the new concept. 
\begin{ex}\label{ex2}
\begin{enumerate}
\item
When $\mathbb P=\{P_0\}$ is a singleton, the budget constraint is linear; in this case, Knight--Walras and Arrow--Debreu equilibria  coincide. In particular, equilibrium allocations are efficient. 
\item At the other extreme, when $\mathbb P = \Delta$ consists of all probability measures, and the state-price $\psi$ is strictly positive, the budget sets consist of all plans $c$ with $c \le e^i$ in all states. We are economically in the situation where all spot markets at time $1$ operate separately and there is no possibility to transfer wealth over states. As a consequence, with strictly monotone utility functions, no trade is an equilibrium for every strictly positive state price $\psi$. Equilibrium allocations are inefficient, in general,  and equilibrium prices are  indeterminate.

\end{enumerate}
\end{ex}

\section{Existence of Knight--Walras Equilibria}

We first establish existence of a Knight--Walras equilibrium. If agents have single--valued demand, one can modify a standard  proof, as, e.g. in \cite{hildenbrand1988equilibrium}, to establish existence.

Under Knightian uncertainty, natural examples arise where demand can be set--valued. A point in case are agents (traders) who minimize a coherent risk measure; equivalently, one might think of ambiguity--averse, yet risk neutral agents. 

 If we include this general case, one needs to work more. We think that the proof, beyond the natural interest in generality, provides additional insights into the working of markets under Knightian uncertainty. We thus provide the more general, lengthier version.

\begin{as}\label{A2}
Each agent's endowment $e^i$  is strictly positive and each utility function $U^i:\mathbb{X}_+\rightarrow \mathbb{R}$ 
\begin{itemize}
\item locally non satiated.
\item  is norm continuous  and concave on $\mathbb{X}_+$.
\\[-1.5em]
\item is monotone:  if $x\geq y$  then $U^i(x)\geq U^i(y)$.
\\[-1.5em]
\end{itemize}
%Moreover we assume  $P(e^i>0)>0$ for every $P\in\mathbb{P} $ and that the aggregate endowment of the economy satisfies $e>0$ $\mathbb{P}$-quasi surely.
\end{as}

\begin{thm}\label{MAIN}
Under Assumption \ref{A2},  Knight-Walras equilibria  $(\psi,c)$ exist.
\end{thm}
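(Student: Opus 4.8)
The plan is to realize the equilibrium as a Nash equilibrium of a generalized game in the spirit of Debreu's social system, and to obtain that Nash equilibrium from Kakutani's fixed point theorem. Alongside the $I$ consumers I would introduce two fictitious auctioneers. The \emph{Walrasian} player chooses a normalized state price $\psi$ in the simplex $\Delta=\{\psi\in\mathbb{R}^\Omega_+:\sum_{\omega}\psi(\omega)=1\}$; the \emph{Knightian} player chooses a prior $P\in\mathbb{P}$. Writing $z=\sum_{i}(c^i-e^i)$ for the aggregate excess demand, both auctioneers are assigned the single bilinear payoff $\Phi(\psi,P)=E^P[\psi z]=\sum_{\omega}P(\omega)\psi(\omega)z(\omega)$, the Walrasian player maximizing it over $\psi\in\Delta$ and the Knightian player over $P\in\mathbb{P}$. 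Consumer $i$ maximizes $U^i$ over the genuine Knight--Walras budget set $\mathbb{B}(\psi,e^i)$ from \eqref{budget}, which uses the full sublinear $\mathbb{E}$ and does not depend on $P$. As usual I would first truncate all consumption to a compact box $K=\{c\in\mathbb{X}_+:c(\omega)\le M\}$ with $M>\max_{\omega}\sum_i e^i(\omega)$, solve the truncated game, and remove the truncation at the end.

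The reason for splitting the ``invisible hand'' into two players is precisely to keep the fixed point argument inside the Kakutani framework: the combined market valuation $\mathbb{E}[\psi z]=\max_{P}E^P[\psi z]$ is convex in $\psi$, so a single auctioneer maximizing it need not have convex-valued best responses, whereas $\Phi$ is linear in each auctioneer's own variable over a convex compact strategy set, so both auctioneer correspondences are nonempty, convex, compact and upper hemicontinuous. For the consumers I would invoke Berge's maximum theorem. The map $(c,\psi)\mapsto\mathbb{E}[\psi(c-e^i)]$ is jointly continuous (a maximum over the compact $\mathbb{P}$ of continuous affine functions), so $\psi\mapsto\mathbb{B}(\psi,e^i)$ is upper hemicontinuous; lower hemicontinuity needs a Slater/cheaper-point condition, which is where Assumption \ref{A2} enters: since $e^i\gg 0$ and every $P\in\mathbb{P}$ has full support, $\mathbb{E}[\psi(0-e^i)]=-\min_{P}E^P[\psi e^i]<0$ for every $\psi\in\Delta$, so $0$ is an interior point of each budget set. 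With $U^i$ continuous and concave and the budget set convex, Berge then yields an upper hemicontinuous, nonempty, compact and convex-valued demand correspondence, and Kakutani applies to the product of all best-response maps, producing a fixed point $(\psi^{*},P^{*},(c^i)^{*})$.

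It remains to check that the fixed point is a Knight--Walras equilibrium, the only genuinely new step being market clearing, since a Walras law in the usual additive form fails under sublinear prices. Each consumer optimum satisfies $\mathbb{E}[\psi^{*}(c^i-e^i)]\le 0$, and since $\mathbb{E}[\,\cdot\,]\ge E^{P^{*}}[\,\cdot\,]$ for the single prior $P^{*}\in\mathbb{P}$, summing over $i$ gives $E^{P^{*}}[\psi^{*}z]\le 0$. The Walrasian player's optimality means $\sum_{\omega}\psi^{*}(\omega)\,[P^{*}(\omega)z(\omega)]=\max_{\psi\in\Delta}\sum_{\omega}\psi(\omega)[P^{*}(\omega)z(\omega)]=\max_{\omega}P^{*}(\omega)z(\omega)$, so $\max_{\omega}P^{*}(\omega)z(\omega)\le 0$; because $P^{*}(\omega)>0$ in every state by full support, this forces $z\le 0$, i.e.\ feasibility. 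Market clearing then gives $\sum_i c^i\le\sum_i e^i$, so each $c^i$ lies strictly inside the box $K$, and concavity of $U^i$ shows the truncated optimum is optimal in the full budget set $\mathbb{B}(\psi^{*},e^i)$; hence $(\psi^{*},(c^i)^{*})$ is a Knight--Walras equilibrium. I expect the main obstacle to be exactly this decomposition: arranging the two auctioneers so that their linear payoffs reproduce the sublinear market valuation while keeping all best-response correspondences convex-valued, together with securing the cheaper-point condition that makes the budget correspondence lower hemicontinuous.
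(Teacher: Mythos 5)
Your proposal is correct and follows essentially the same route as the paper's own proof: a Debreu-style generalized game with a Walrasian price player and an additional Knightian player sharing the bilinear payoff $E^P[\psi z]$, truncation to a compact box, Berge plus Kakutani, feasibility from the Walrasian player's optimality together with the full support of the fixed-point prior, and de-truncation by concavity. The only differences are cosmetic (you use $0$ as the cheaper point where the paper uses $e^i/2$, and you sum the per-agent inequalities $E^{P^*}\le\mathbb{E}$ where the paper invokes subadditivity).
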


 A standard existence proof of Arrow--Debreu equilibrium  uses a game--theoretic ansatz. One introduces a price player who maximizes the expected value of aggregate excess demand over state prices.  Let us call this type of player a \textit{Walrasian price player}. The consumers  maximize their utility given the budget constraint. The equilibrium of the game is an  Arrow--Debreu equilibrium.

Our method to prove existence follows this game--theoretic approach. Due to Knightian uncertainty, we have to introduce a second,    \textit{Knightian},  price player. This player maximizes the expected value of aggregate excess demand over the priors $P \in \PP$, taking the   state price as given.  The Walrasian price player in the Knight--Walras equilibrium acts in the same way as in the Arrow--Debreu equilibrium.

With nonlinear prices, the question of arbitrage comes up naturally. After all, the equilibrium concept would not be very plausible if it would allow for arbitrage opportunities.  
 
\begin{cor}\label{cor1}
In a Knight--Walras equilibrium, the following no arbitrage condition holds
$$\Psi\left(\sum_i \left(\hat c^i -e^i\right)\right)= \sum_i\Psi\left( \hat c^i -e^i \right).$$
\end{cor}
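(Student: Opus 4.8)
The plan is to show that both sides of the asserted identity are equal to zero; the general inequality $\Psi\left(\sum_i(\hat c^i-e^i)\right)\le\sum_i\Psi(\hat c^i-e^i)$ holds automatically and is what gives the statement its ``no arbitrage'' reading, so the real content is that equilibrium forces both expressions to vanish. That inequality is immediate: since $\mathbb E$ is subadditive and $\psi\ge 0$, the coherent price $\Psi(X)=\mathbb E\psi X$ is itself subadditive, whence $\Psi\left(\sum_i X_i\right)\le\sum_i\Psi(X_i)$ for any finite family, and in particular for the net trades $X_i=\hat c^i-e^i$.

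First I would show the right-hand side is zero. By local non-satiation, each equilibrium bundle $\hat c^i$ exhausts its budget: if $\Psi(\hat c^i-e^i)=\mathbb E\psi(\hat c^i-e^i)<0$, then by norm-continuity of the map $X\mapsto\mathbb E\psi X$ every bundle sufficiently close to $\hat c^i$ still lies in $\mathbb B(\psi,e^i)$, and local non-satiation supplies such a bundle with strictly higher utility, contradicting optimality. Hence $\Psi(\hat c^i-e^i)=0$ for every $i$, and summing gives $\sum_i\Psi(\hat c^i-e^i)=0$.

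It remains to show the left-hand side vanishes, i.e. that the aggregate net trade $Z:=\sum_i(\hat c^i-e^i)$ satisfies $\Psi(Z)=0$. Feasibility gives $Z\le 0$, and since $\psi\ge 0$ and $\mathbb E$ is monotone with $\mathbb E 0=0$, this already yields $\Psi(Z)=\mathbb E\psi Z\le 0$. The reverse inequality is the crux. Because $\mathbb E$ is relevant, its representing priors have full support, so any $Z\le 0$ with $\psi Z\not\equiv 0$ would give $\Psi(Z)=\max_{P\in\mathbb P}E^P\psi Z<0$; thus $\Psi(Z)=0$ is equivalent to exact clearing $\psi Z=0$ pointwise. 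I would obtain this from the equilibrium itself: as in the proof of Theorem \ref{MAIN}, optimality of the Walrasian price player yields the complementary-slackness relation $\psi Z=0$ (markets clear in every state carrying a positive price), and with $\mathbb E 0=0$ this gives $\Psi(Z)=\mathbb E 0=0$. Combining the two zeros yields the identity.

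The main obstacle is precisely this last step. Under a sublinear, relevant price system excess supply carries strictly positive value, so—unlike in the Arrow--Debreu case—it cannot be absorbed at zero cost; the vanishing of the aggregate net-trade value is therefore not a free consequence of feasibility but must be extracted from equilibrium optimality (market clearing). Everything else reduces to the subadditivity and monotonicity of $\mathbb E$ together with the standard budget-exhaustion argument.
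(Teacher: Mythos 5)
Your overall strategy---show that both sides of the identity vanish, with the right-hand side handled by budget exhaustion under local non-satiation---is the same as the paper's, and that part of your argument is fine. The gap is in your treatment of the left-hand side. Writing $Z=\sum_i(\hat c^i-e^i)$, you correctly reduce $\Psi(Z)=0$ to the pointwise identity $\psi Z=0$ (via relevance and $Z\le 0$), but you then assert $\psi Z=0$ as ``complementary slackness from the optimality of the Walrasian price player.'' This step fails for two reasons. First, the Walrasian price player is a device internal to the existence proof of Theorem \ref{MAIN}; it is not part of the definition of a Knight--Walras equilibrium, so for an arbitrary equilibrium $(\psi,\hat c)$---which is what the corollary is about---there is no price player whose optimality you can invoke. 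Second, even at the fixed point of the existence proof, the price player's optimality only tells you that $\bar\psi$ is concentrated on $\arg\max_{\omega}\bar P(\omega)Z(\omega)$; to conclude $\bar\psi Z=0$ from this you must already know that $\max_{\omega}Z(\omega)=0$. In the classical Debreu argument that extra input is Walras' law with equality, obtained by summing the binding budget constraints; here, because the price is sublinear, summing the binding constraints only yields $\mathbb{E}[\psi Z]\le\sum_i\mathbb{E}[\psi(\hat c^i-e^i)]=0$ by subadditivity, an inequality in the unhelpful direction. So the equality you need is precisely what remains to be proved, and your argument is circular at this point.

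The paper closes this gap by a different route: it upgrades feasibility $Z\le 0$ to exact market clearing $Z=0$ using local non-satiation, after which $\Psi(Z)=\mathbb{E}[\psi\cdot 0]=0$ by constant preservation, and the asserted identity follows by inserting a maximizing prior $P^*$ for $\mathbb{E}\left[\psi\sum_i(\hat c^i-e^i)\right]$ and writing $0=E^{P^*}\left[\psi\sum_i(\hat c^i-e^i)\right]=\sum_iE^{P^*}\left[\psi(\hat c^i-e^i)\right]\le\sum_i\mathbb{E}\left[\psi(\hat c^i-e^i)\right]=0$. If you want to keep your structure, you should replace the price-player appeal by this market-clearing-with-equality step (or otherwise derive $\psi Z=0$ from properties of the equilibrium itself); as written, the crucial claim $\psi Z=0$ is asserted rather than established.
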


\begin{proof} In step 6 of the proof of Theorem \ref{MAIN}, we have shown $ \sum_i(\hat c^i-e^i)\leq 0$, local non--satiation implies $ \sum_i(\hat c^i-e^i)= 0$. The constant preserving property of $\mathbb{E}$ yields
\begin{eqnarray*}
0= \mathbb{E}[\psi 0]&=& \mathbb{E}\left[ \psi \sum_i(\hat c^i-e^i)\right]\\
&= & {E}^{P^*}\left[ \psi \sum_i(\hat c^i-e^i)\right] \\
&= & \sum_i {E}^{P^*}\left[ \psi (\hat c^i-e^i)\right] \\
&\leq & \sum_i \mathbb{E}\left[\psi( \hat c^i-e^i)\right]\\
&=&0.
\end{eqnarray*}
The last equality follows  from the fact that each agent exhausts the budget in equilibrium.
\end{proof}

\section{(Non--) Equivalence to Arrow--Debreu Equilibrium}

If the expectation $\mathbb E$ is linear, Knight--Walras equilibria are Arrow--Debreu equilibria; so agents achieve an efficient allocation, by the first welfare theorem.
With incomplete Knightian preferences in the sense of \cite{Bew02}, the Arrow--Debreu equilibria of the linear economies $\cE^{\{P\}}$ are also equilibria under Knightian uncertainty (see \cite{rigotti2005uncertainty} and \cite{dana2013intertemporal}, wo da angeben!). It seems thus natural to ask whether such a result might hold true for our Knightian economies.

We will now explore under what conditions this equivalence remains. In a first step, we show that Knight--Walras equilibria are Arrow--Debreu equilibria if and only if the net consumption values of all agents are ambiguity--free in mean, i.e. their expectation does not depend on the specific prior in the representing set $\PP$. 

We then show for the particular transparent example of no aggregate uncertainty and Gilboa--Schmeidler preferences that this property is generically not satisfied. 

\begin{defi}
Fix a convex, compact, nonempty  set of priors $\mathbb P$. We call a plan 
$\xi \in \mathbb X$ $(\mathbb P)$--ambiguity free in mean if $\xi$ has the same expectation for all $Q \in\mathbb P$, i.e. there is a constant $c \in \mathbb R$ with
$E^Q \xi = c$ for all $Q \in \mathbb P$.
\end{defi}
Note that a  plan $\xi$ is ambiguity--free in mean if and only if we have
$$\EE (-\xi)= - \EE \xi\,.$$ We will use this fact sometimes below\footnote{
The concept has appeared before in (\cite{riedel2014non} and \cite{de2011ambiguity}. For unambiguous events, see also \cite{epstein2001subjective}.
In the spirit of \cite{brunnermeier2014welfare}, elements in the space $\mathbb{L}$, can be considered as belief neutral in expectation. Section 4 presents a more detailed account. A stronger notion would require that the probability distribution of a plan is the same under all priors in $\PP$; \cite{ghirardato2004differentiating} call such plans ''crisp acts''.}.

\begin{lem}\label{LemmaL}
The set of plans $\xi \in \mathbb X$ which are $\mathbb P$--ambiguity--free in mean forms a subspace of $\mathbb X$. We denote this subspace by $\mathbb L$ or $\mathbb L^{\mathbb P}$.
If $\# \mathbb P >1$, $\mathbb L$ has a strictly smaller dimension than $\mathbb X$.
\end{lem}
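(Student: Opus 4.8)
The plan is to recognize the whole statement as elementary linear algebra, exploiting that for each prior $Q$ the map $\xi \mapsto E^Q \xi = \sum_{\omega\in\Omega} Q(\omega)\xi(\omega)$ is a linear functional on the finite-dimensional space $\mathbb X = \mathbb R^\Omega$. Identifying $\mathbb X$ with its dual through the standard inner product $\langle v, \xi\rangle = \sum_{\omega} v(\omega)\xi(\omega)$, each prior $Q$ is represented by its probability vector and $E^Q\xi = \langle Q, \xi\rangle$.

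For the subspace claim I would fix an arbitrary reference prior $Q_0 \in \mathbb P$ and rewrite the defining condition: $\xi \in \mathbb L$ holds exactly when $E^Q\xi = E^{Q_0}\xi$ for every $Q\in\mathbb P$ (which is clearly equivalent to the existence of a common constant $c$), i.e. when $\langle Q - Q_0, \xi\rangle = 0$ for all $Q \in \mathbb P$. Thus $\mathbb L = \bigcap_{Q\in\mathbb P} \ker\langle Q - Q_0, \cdot\rangle$ is an intersection of kernels of linear functionals, equivalently the orthogonal complement of $V := \operatorname{span}\{Q - Q_0 : Q\in\mathbb P\}$, and is therefore a linear subspace of $\mathbb X$. (One could instead invoke the remark following the definition, that $\xi$ is ambiguity-free in mean iff $\mathbb E(-\xi) = -\mathbb E\xi$, but the linear-functional viewpoint makes the subspace property immediate.)

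For the dimension claim, note that $\dim \mathbb L = n - \dim V$, where $n = |\Omega| = \dim \mathbb X$. If $\#\mathbb P > 1$, I would pick two distinct priors $Q_1 \neq Q_2$ in $\mathbb P$. Since distinct probability measures on $\Omega$ differ in at least one component, $Q_1 - Q_2$ is a nonzero vector in $\mathbb R^\Omega$, so the functional $\langle Q_1 - Q_2, \cdot\rangle$ is nonzero and its kernel is a hyperplane of dimension $n-1$. Because $Q_1 - Q_2 = (Q_1 - Q_0) - (Q_2 - Q_0) \in V$, we get $\mathbb L \subseteq \ker\langle Q_1 - Q_2, \cdot\rangle$, whence $\dim \mathbb L \le n-1 < n = \dim \mathbb X$.

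The argument carries no genuine obstacle; the one point that deserves a line of care is the strict dimension drop, namely confirming that two genuinely distinct priors produce a genuinely nonzero linear constraint on $\xi$. This is precisely where the hypothesis $\#\mathbb P > 1$ (rather than mere nonemptiness) enters, and it relies on probability vectors being separated by their coordinates in $\mathbb R^\Omega$.
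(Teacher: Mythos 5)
Your proof is correct and follows essentially the same route as the paper's: the subspace property comes from linearity of the expectations $E^Q$, and the strict dimension drop comes from observing that two distinct priors $Q_1\neq Q_2$ give a nonzero functional $\langle Q_1-Q_2,\cdot\rangle$ whose kernel is a hyperplane containing $\mathbb L$. Your phrasing of $\mathbb L$ as the orthogonal complement of $\operatorname{span}\{Q-Q_0:Q\in\mathbb P\}$ is a slightly more systematic packaging, but the substance is identical.
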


\begin{proof3}{\ref{LemmaL}}
Let $\mathbb L$ denote the set of all contingent plans which are ambiguity--free in mean. 
Constant plans are obviously ambiguity--free in mean, hence $\mathbb L$ is not empty. As expectations are linear, the property of being ambiguity--free in mean is preserved by taking sums and scalar products. Hence, $\mathbb L$ is a subspace of $\mathbb X$.

 If $\# \mathbb P >1$, we have $P_1, P_2\in\mathbb{P}$ such that $P_1-P_2 \neq 0\in \mathbb{X}$.  
In abuse of notation $x\in \mathbb{X}$ is $\{P_1,P_2\}$--ambiguity--free in the mean, if 
$$
\langle P_1, x \rangle=  \langle P_2, x \rangle.
$$
This equation yields a hyperplane $H=\{x: \langle P_1-P_2, x \rangle=0\}$, with $0\in H$. 
Consequently $H$ is subvector space of $\mathbb{X}$ with strictly smaller dimension and contains all  plans being $\{P_1,P_2\}$--ambiguity free in mean. 

The result follow from the first part and $\{P_1,P_2\}\subset \mathbb{P}$ implies $\mathbb{L}\subset H$. 
\end{proof3}

We can now clarify when Arrow--Debreu equilibria of a particular linear economy $\cE^{\{P\}}$ are also Knight--Walras equilibria.

\begin{thm}\label{ThmAD=KW}
Fix a prior $P \in \mathbb{P}$. Let $(\psi, \left(c^i\right))$ be an Arrow--Debreu equilibrium for the (linear) economy $\cE^{\{P\}}$. 
Then $(\psi,\left(c^i\right))$ is a Knight--Walras equilibrium for $\cE^\mathbb{P}$ if and only if  the value of net demands $\xi^i = \psi (c^i-e^i)$ are ambiguity--free in the mean for all agents $i$.
\end{thm}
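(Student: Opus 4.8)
The plan is to work entirely with the value of net demand $\xi^i = \psi(c^i-e^i)$ and to exploit a single inclusion between the Arrow--Debreu and the Knight--Walras budget sets. Write $B^{\{P\}}(\psi,e^i)=\{c\in\mathbb{X}_+ : E^P[\psi(c-e^i)]\le 0\}$ for the linear budget set of $\cE^{\{P\}}$. Since $P\in\mathbb{P}$, for every $c$ we have $E^P[\psi(c-e^i)]\le \max_{Q\in\mathbb{P}}E^Q[\psi(c-e^i)]=\mathbb{E}[\psi(c-e^i)]$, and therefore $\mathbb{B}(\psi,e^i)\subseteq B^{\{P\}}(\psi,e^i)$: the Knightian budget set is always the smaller one. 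I would also record two consequences of local non--satiation: in either equilibrium each agent exhausts the relevant budget constraint with equality (if the constraint were strict at $c^i$, continuity of $c\mapsto\mathbb{E}[\psi(c-e^i)]$ together with local non--satiation would produce a strictly preferred, still affordable plan), and market clearing holds as an identity $\sum_i(c^i-e^i)=0$ in $\mathbb{X}$, hence $\sum_i\xi^i=0$.

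For the \emph{if} direction, assume each $\xi^i$ is ambiguity--free in mean. Feasibility is a condition on the allocation alone and is inherited from the Arrow--Debreu equilibrium. Ambiguity--freeness gives $\mathbb{E}\xi^i = E^P\xi^i$, and the Arrow--Debreu budget constraint yields $E^P\xi^i\le 0$, so $\mathbb{E}[\psi(c^i-e^i)]\le 0$ and thus $c^i\in\mathbb{B}(\psi,e^i)$. Optimality is then immediate from the inclusion above: any $d\in\mathbb{B}(\psi,e^i)$ with $U^i(d)>U^i(c^i)$ would lie in the larger set $B^{\{P\}}(\psi,e^i)$ and contradict optimality of $c^i$ in the Arrow--Debreu equilibrium. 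Hence $(\psi,(c^i))$ is a Knight--Walras equilibrium.

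For the \emph{only if} direction, assume $(\psi,(c^i))$ is in addition a Knight--Walras equilibrium. Budget exhaustion in $\cE^{\mathbb{P}}$ gives $\mathbb{E}\xi^i=\max_{Q\in\mathbb{P}}E^Q\xi^i=0$, so $E^Q\xi^i\le 0$ for every $Q\in\mathbb{P}$ and every $i$. On the other hand, market clearing $\sum_i\xi^i=0$ forces $\sum_i E^Q\xi^i = E^Q\big[\textstyle\sum_i\xi^i\big]=0$ for each fixed $Q\in\mathbb{P}$. A sum of nonpositive numbers that vanishes must have all summands equal to zero, so $E^Q\xi^i=0$ for every $Q\in\mathbb{P}$ and every $i$. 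This is exactly the statement that each $\xi^i$ is $\mathbb{P}$--ambiguity--free in mean (with constant $0$).

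The conceptual core, and the step I expect to require the most care, is the aggregation argument in the \emph{only if} direction: individual budget exhaustion only delivers the one--sided bound $E^Q\xi^i\le 0$, and it is market clearing \emph{across agents} that upgrades this to the equality $E^Q\xi^i=0$ uniformly in $Q$. The remaining subtleties are the standard ones, namely justifying budget exhaustion and the equality form of market clearing from local non--satiation (the latter already used in Corollary~\ref{cor1}); these are routine given continuity of $\mathbb{E}$ and the structure of $\psi$.
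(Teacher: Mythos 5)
Your proposal is correct and follows essentially the same route as the paper: the ``if'' direction rests on the inclusion of the Knightian budget set in the linear one plus ambiguity--freeness of $\xi^i$, and the ``only if'' direction combines binding budget constraints with market clearing to force $E^Q\xi^i=0$ for all $Q$. The only cosmetic difference is that you run the aggregation argument prior--by--prior, whereas the paper phrases the same step through sublinearity as $\mathbb E(-\xi^i)=\mathbb E\bigl(\sum_{j\neq i}\xi^j\bigr)\le\sum_{j\neq i}\mathbb E\xi^j=0$.
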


\begin{proof}
Let $(\psi, \left(c^i\right))$ be an Arrow--Debreu equilibrium for the (linear) economy $\cE^{\{P\}}$.  
Then markets clear.

Suppose first that the value of net demands $\xi^i = \psi (c^i-e^i)$ are ambiguity--free in the mean for all agents $i$. 
We need to check that $c^i$ is in agent $i$'s budget set for the Knightian economy $\cE^{\mathbb P}$, and optimal.
By assumption, we have 
$$E^Q \psi (c^i-e^i) = c $$ for all $Q \in \mathbb{P}$ for some constant $c$.
As $c^i$ is budget--feasible in $\cE^\PP$ and utility functions are locally non--satiated by Assumption \ref{A2}, we have $c=0$, i.e.
$$ \mathbb E \psi (c^i-e^i) = E^P \psi (c^i-e^i) = 0 \,.$$
As $c^i$ is part of an Arrow--Debreu equilibrium, $c^i$ is optimal in the linear budget set given by the prior $P$; this budget set contains the budget of the Knightian economy $\cE^{\mathbb P}$. Hence, $c^i$ is optimal for agent $i$ in the Knightian economy. We conclude that $(\psi,\left(c^i\right))$ is a Knight--Walras equilibrium for $\cE^{\mathbb P}$.

Now suppose that $(\psi,\left(c^i\right))$ is a Knight--Walras equilibrium. 
 We need to check that all $\xi^i$ have expectation zero under all $Q \in \mathbb P$ for all $i$.

As utility functions are locally non satiated, the budget constraint is binding for all agents, $\mathbb E \mathbb \xi^i = 0$ for all $i$.
It is enough to show that $\mathbb E \mathbb (-\xi^i)=0$ for all $i$ (because this entails
$ \min_{P\in\mathbb P} E^P \xi^i = \max_{P\in\mathbb P} E^P \xi^i=0\,.$)
By sublinearity, we have $\mathbb E \mathbb (-\xi^i)\ge 0$.
Market clearing implies
$$ \mathbb E (- \xi^i) = \mathbb E \left( \sum_{j\not= i} \xi^j \right)
\le \sum_{j\not= i} \mathbb E \xi^j = 0\,.$$
We conclude that $\mathbb E (- \xi^i)=0$ for all $i$, as desired.  
\end{proof}

We now turn to the particularly transparent case of no aggregate uncertainty and multiple prior utilities. We shall show that generically in endowments, Arrow--Debreu equilibria are not Knight--Walras equilibria.

So assume for the rest of this section that $$ \sum e^i = 1$$ and utilities are of the form
$$U^i(c)= \min_{P \in \mathbb P} E^P u^i(c^i)$$ where the Bernoulli utility functions $u^i : \mathbb R_+ \to \mathbb R$ satisfy  the standard assumptions of Example \ref{ExUtility}.

These economies have been studied in detail in  (\cite{billot2000sharing} and \cite{chateauneuf2000optimal}). We recall the results.
if  $(\psi, (c^i))$ is an Arrow--Debreu equilibrium of the economy  $\left( I, \left(e^i,U^i \right)_{i\in I}, \mathbb{P} \right)$,
 $(c^i)$ is a full insurance allocation. Moreover, there exists $\tilde P \in\PP$ such that $\psi$ is proportional to the density $\frac{d\tilde P}{dP}$, and 
$(1, (c^i))$ is an equilibrium of the economy  $\left( I, \left(e^i,U^i \right)_{i\in I}, \{\tilde P\} \right)$.

Equilibrium prices are not determinate because the utility gradient at equilibrium consists of all priors $\mathbb P$. 
The above result states that one can take the equilibrium state-price density to be equal to $1$ after a suitable change of measure.

\begin{thm}\label{thm2}Assume that $\EE$ is not linear. 
Generically in endowments, Arrow--Debreu equilibria of $\cE^{\{P\}}$ for some $P \in \mathbb P$ are not  Knight--Walras equilibria of $\cE^{\mathbb P}.$ 

More precisely: let $M=\left\{ (e^i)_{i=1,\ldots,I} \in \mathbb X_{++}^I : \sum e^i = 1 \right\}$ be the set of economies without aggregate uncertainty normalized to $1$. 
Let $ N$ be the subset of elements $(e^i)$ of $M$ for which there exists $P \in \mathbb P$ and an Arrow--Debreu equilibrium $(\psi, (c^i))$ of 
 the economy  $\cE^{\{P\}}$ 
which is also a Knight--Walras equilibrium of the economy $ \cE^\EE$. $N$ is a Lebesgue null subset of the $(I-1) \cdot  \# \Omega$--dimensional manifold $M$.\end{thm}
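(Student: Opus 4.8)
The plan is to combine Theorem~\ref{ThmAD=KW} with the recalled full--insurance structure of Arrow--Debreu equilibria and then reduce the statement to an elementary codimension count based on Lemma~\ref{LemmaL}. First I would describe the Arrow--Debreu equilibria of the linear economies $\cE^{\{P\}}$. Fix $P\in\mathbb P$. Since the aggregate endowment is the constant plan $\sum_i e^i=1$ and each $u^i$ is strictly concave, every Pareto optimum maximizes $E^P\big[\sum_i\lambda_i u^i(c^i)\big]$ subject to the state-wise constraint $\sum_i c^i(\omega)=1$; as neither this constraint nor the weights depend on $\omega$, the optimal $c^i$ is constant, so the allocation is full insurance, $c^i\equiv a^i$. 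The gradient of $c\mapsto E^P u^i(c)$ at the constant $a^i$ is $u^{i\prime}(a^i)\,P$, so in the paper's convention the supporting state price is constant and, after normalization, $\psi\equiv 1$; the binding budget constraint then gives $a^i=E^P e^i$. This is exactly the recalled Billot--Chateauneuf characterization.

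Next I would apply Theorem~\ref{ThmAD=KW}. With $\psi\equiv 1$ and $c^i\equiv a^i=E^P e^i$, the value of agent $i$'s net demand is
$$\xi^i=\psi(c^i-e^i)=E^P e^i-e^i.$$
The scalar $E^P e^i$ defines a constant plan, which is trivially ambiguity--free in mean and hence lies in $\mathbb L$; since $\mathbb L$ is a subspace (Lemma~\ref{LemmaL}), we obtain $\xi^i\in\mathbb L$ if and only if $e^i\in\mathbb L$. By Theorem~\ref{ThmAD=KW} the equilibrium is a Knight--Walras equilibrium iff $\xi^i\in\mathbb L$ for every $i$; as this condition is \emph{independent of the chosen} $P$, membership of $(e^i)$ in $N$ forces $e^i\in\mathbb L$ for all $i$. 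Thus $N\subseteq M\cap\mathbb L^I$.

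Finally I would count dimensions. By Lemma~\ref{LemmaL}, $\mathbb E$ nonlinear means $\#\mathbb P>1$, so $\ell:=\dim\mathbb L\le\#\Omega-1$. The manifold $M$ lies in the affine space $A=\{(e^i):\sum_i e^i=1\}$ of dimension $(I-1)\#\Omega$. Since $1\in\mathbb L$ and the linear summation map $\mathbb L^I\to\mathbb L$ is surjective, the fiber $A\cap\mathbb L^I=\{(e^i)\in\mathbb L^I:\sum_i e^i=1\}$ is an affine subspace of $A$ of dimension $(I-1)\ell\le(I-1)(\#\Omega-1)<(I-1)\#\Omega$. Hence $A\cap\mathbb L^I$, and therefore $N$, is a proper lower--dimensional affine slice of $A$, and so is a Lebesgue null subset of $M$.

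The substantive inputs are the full--insurance/constant--price description of the equilibria, for which I would lean on the recalled Billot--Chateauneuf results, and the collapse of the Knight--Walras criterion to the $P$--free condition $e^i\in\mathbb L$; once these are in place the null conclusion is a one--line codimension estimate. I expect the only delicate point to be this $P$--independence: one must check that quantifying over the whole, possibly continuous, family $\{\cE^{\{P\}}:P\in\mathbb P\}$ does not enlarge $N$ beyond $M\cap\mathbb L^I$, which is precisely what the identity $\xi^i=E^P e^i-e^i$ secures.
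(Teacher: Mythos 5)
Your proposal is correct and follows essentially the same route as the paper: both reduce the claim, via the full--insurance/constant--price characterization of the Arrow--Debreu equilibria and Theorem~\ref{ThmAD=KW}, to the $P$--independent condition $e^i\in\mathbb L$ for all $i$, and then invoke the dimension deficit of $\mathbb L$ from Lemma~\ref{LemmaL}. Your explicit computation $\dim\bigl(A\cap\mathbb L^I\bigr)=(I-1)\dim\mathbb L<(I-1)\#\Omega$ merely spells out the final null--set step that the paper states more briefly.
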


\begin{proof}
Let $(e^i)$ be an allocation in $N$. Let $(\psi, (c^i))$ be an Arrow--Debreu equilibrium of the economy  $\left( I, \left(e^i,U^i \right)_{i\in I}, \{P\} \right)$. Then we know that $c^i >0$ is a constant for each agent $i$. Hence, each $P\in \mathbb{P}$ is a minimizing prior of maxmin expected utilities. In particular and in abuse of notation, we have $\partial U^i(c^i)=   {u^i}'(c^i)\cdot\mathbb{P}$, for each $i\in \mathbb{I}$.

By the first order condition of Pareto optimality of the $P$-Arrow-Debreu equilibrium, we have for some equilibrium weight $\alpha_P\in \Delta_I$
$$
\psi\cdot  P\in  \bigcap_{i\in \mathbb{I}} \alpha^i_P \partial U^i(c^i)=   \bigcap_{i\in \mathbb{I}} \alpha^i_P {u^i}'(c^i)\mathbb{P}.
$$
Note, that the intersection is nonempty by the Pareto optimality of $(c^i)$. 

Since $\alpha_i {u^i}'(c^i)$ is constant for $i\in I$ and $P\in\mathbb{P}$, this implies that $\psi$ is contant.
Let us say $\psi=1$ without loss of generality.
% Ansonsten müssen wir es zeigen, was ich noch nicht gesehen habe, steht aber vielleicht in Billot et al.}
From Theorem \ref{ThmAD=KW}, we then know that $\psi (c^i-e^i) \in \mathbb L$. As $\psi$ and $c^i$ are constants, and $\mathbb L$ is a vector space by Lemma \ref{LemmaL}, we conclude that $e^i \in \mathbb L$. 
As the space $\mathbb L$ has strictly smaller dimension than $\mathbb X$, again by Lemma \ref{LemmaL}, we conclude that $N$ is a null set in $M$. 
\end{proof}

\section{Uncertainty--Neutral Efficiency}

In general, Knight--Walras equilibria are inefficient. We introduce now a concept of constrained efficiency for our Knightian framework. If the Walrasian  auctioneer aims for  \textit{robust} rules,  he might consider only net trades that are independent of the specific priors in $\PP$. 

We might also consider a situation of cooperative negotiation among the agents. In a framework of Knightian uncertainty described by the set of priors $\PP$, different priors may matter for different agents. For multiple prior agents, e.g., different priors are usually relevant for buyers and sellers of a contingent claim\footnote{For general variational preferences, one can still derive the relevant beliefs for a given contingent consumption  plan, compare, e.g., \cite{rigotti2008subjective}. 
 This welfare theorem can be compared with the welfare criterion  of \cite{blume2015case} for evaluation of different  financial market designs, which has a similar motivation as for  our notion of Knight--Walras equilibria.
On the one hand, we aim to prevent welfare-reducing speculation through  the disambiguation of net trades.    On the other hand, the price we have to pay, as a designer, is the exclusion of welfare-improving insurance possibilities through the conditions on the value of net trades. 
}.
A reallocation of goods over states is then uncontested  if  its value is independent\footnote{Recently, new notions of Pareto optimality under uncertainty appeared in the literature.  \cite{gilboa2014no}, \cite{brunnermeier2014welfare}, and \cite{blume2015case} make the point that the standard notion of  Pareto optimality can be spurious
when agents hold subjective heterogeneous beliefs. Earlier, \cite{dreze1972market} already emphasized that importing   the concept of Pareto dominance under certainty to the Arrow--Debreu world under uncertainty may result in odd implications.  Since our setup allows for heterogeneous expectations, see Example \ref{ExUtility}, their arguments carry over, in principle, to our setup.}  of the specific priors in $\PP$.
%\frank{Die konkrete Rechnung fur multiple prior habe ich rausgenommen.}

The preceding reasoning suggests the following concept of constrained efficiency.

\begin{defi}\label{PO}
Let $\cE=\left( I, \left(e^i,U^i \right)_{i\in \mathbb{I}}, \mathbb E \right)$ be a Knightian economy. 
Let $c=\left(c^i\right)_{i\in \mathbb{I}}$ be a feasible allocation. Let $\psi$ be a state--price density. We call the allocation $c$ \textnormal{uncertainty neutral   efficient} (given $\psi$ and $\EE$) if there is no other feasible allocation $d=\left(d^i\right)_{i=1,\ldots,I}$ with
$$ \eta^i= \psi \left(d^i-e^i\right) \in \mathbb L^\EE$$ and  
 $U^i(d^i)> U^i(c^i)$ for all $i\in \mathbb{I}$. 
\end{defi}

Knight--Walras equilibria satisfy our robust version of efficiency.

\begin{thm}\label{eqinef}
Let  $(\psi,c)$  be a Knight--Walras equilibrium of the Knightian economy $\cE=\left( I, \left(e^i,U^i \right)_{i\in \mathbb{I}}, \mathbb E \right)$. Then $c$ is uncertainty neutral  efficient (given $\psi$ and $\EE$).
\end{thm}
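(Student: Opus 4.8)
The plan is to argue by contradiction, following the template of the first welfare theorem but using the constraint $\eta^i \in \mathbb{L}$ to linearize the budget inequalities. Suppose $c$ were not uncertainty--neutral efficient. Then there is a feasible allocation $d = (d^i)_{i \in \mathbb{I}}$ with $\eta^i = \psi(d^i - e^i) \in \mathbb{L}$ and $U^i(d^i) > U^i(c^i)$ for every agent $i$; I would derive a contradiction.

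First I would invoke the equilibrium optimality of each $c^i$. Since $U^i(d^i) > U^i(c^i)$ and $c^i$ is optimal in the Knight--Walras budget set $\mathbb{B}(\psi, e^i)$, the plan $d^i$ cannot lie in that budget set, which by the definition of $\mathbb{B}(\psi, e^i)$ means precisely that $\mathbb{E}\,\eta^i = \mathbb{E}\,\psi(d^i - e^i) > 0$ for each $i$.

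The crux of the argument is to convert these strict sublinear inequalities into strict linear inequalities that hold simultaneously under one and the same prior. Here the hypothesis $\eta^i \in \mathbb{L}$ does the decisive work: since $\eta^i$ is ambiguity--free in mean, the linear expectation $E^Q \eta^i$ is a constant independent of $Q \in \mathbb{P}$, and this constant must coincide with $\mathbb{E}\,\eta^i = \max_{Q \in \mathbb{P}} E^Q \eta^i$. Consequently $E^Q \eta^i = \mathbb{E}\,\eta^i > 0$ for every $Q \in \mathbb{P}$ and every $i$.

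Finally I would aggregate across agents using feasibility and the nonnegativity of $\psi$. Feasibility of $d$ gives $\sum_i (d^i - e^i) \le 0$ pointwise, and multiplying by $\psi \ge 0$ yields $\sum_i \eta^i \le 0$ pointwise. Fixing any single $Q \in \mathbb{P}$ and using linearity and monotonicity of $E^Q$ then gives $0 \ge E^Q \big( \sum_i \eta^i \big) = \sum_i E^Q \eta^i > 0$, the desired contradiction. The main (and really the only) obstacle is conceptual rather than technical: one must recognize that membership in $\mathbb{L}$ is exactly what allows the nonlinear $\mathbb{E}$ to be replaced by a single linear $E^Q$ for all agents at once, so that the pointwise feasibility inequality can be summed. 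Without the $\mathbb{L}$--restriction, subadditivity would only bound $\mathbb{E}\sum_i \eta^i$ from above by $\sum_i \mathbb{E}\,\eta^i$, which points the wrong way and breaks the contradiction---this is precisely the reason full, unconstrained efficiency fails in general.
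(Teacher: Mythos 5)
Your proposal is correct and follows essentially the same route as the paper's proof: optimality forces $\mathbb{E}\,\eta^i>0$, membership in $\mathbb{L}$ converts this into $E^Q\eta^i>0$ under a single common prior, and summing against feasibility yields the contradiction. If anything, your handling of the aggregation step (using $\sum_i\eta^i\le 0$ pointwise and monotonicity of $E^Q$, rather than asserting equality) is slightly more careful than the paper's.
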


\begin{proof}
Let $(\psi,c)$ be a Knight--Walras equilibrium of the Knightian economy $\cE=\left( I, \left(e^i,U^i \right)_{i\in \mathbb{I}}, \mathbb P \right)$.  Suppose there is a feasible allocation $d=\left(d^i\right)_{i=1,\ldots,I}$ 
with $U^i(d^i)> U^i(c^i)$ for all $i\in \mathbb{I}$. From optimality, we have then $ d^i \notin \mathbb B(\psi,e^i)$, or $\EE \eta^i >0$.
Suppose furthermore 
$ \eta^i= \psi \left(d^i-e^i\right) \in \mathbb L^\EE$.
Take any prior $P \in \PP$. As the net excess demand is ambiguity--free in mean, we have
$$E^P \eta^i = \mathbb E \eta^i > 0\,.$$
As the expectation under $P$ is linear, we obtain by summing up and feasibility of the allocation $d$
$$ 0 = E^P \sum_{i=1}^I \psi (d^i-e^i) =  \sum_{i=1}^I  E^P \psi (d^i-e^i) >0 \,,$$ a contradiction. 
\end{proof}

%   \frank{Die Bemerkung über Blume usw. gehört auch zu Brunnermeier usw. meiner Meinung nach. Sollten wir drüber sprechen.}

\section{Sensitivity  of Arrow--Debreu Equilibria with respect to Knightian Uncertainty }

In this section we explore first the robustness of Arrow--Debreu equilibria with respect to the introduction of a small amount of Knightian uncertainty when agents have multiple prior utilities. With no aggregate uncertainty, equilibria change in a discontinuous way with small uncertainty perturbations; whereas agents attain full insurance under pure risk, no trade (and thus no insurance) occurs in equilibrium with a tiny amount of Knightian uncertainty. 
We then take the opposite view and consider growing uncertainty. When uncertainty is sufficiently large, no trade is again  the unique equilibrium.

Throughout this section, we fix continuously differentiable, strictly concave, and strictly increasing Bernoulli utility functions $u^i : \mathbb R_+ \to \mathbb R$ and write for a given set of priors $\PP$
$$U_\PP^i(c)= \min_{P\in\mathbb P} E^P u^i(c)$$ for the associated multiple prior utility function.\footnote{Insurance properties of  sharing rules  are considered for several classes of  preferences.  \cite{de2011ambiguity}  discuss the case for Choquet expected utility. A more general perspective can be found in     \cite{rigotti2008subjective}.}

Let us start  with an example where   the introduction of a tiny amount of uncertainty changes the equilibrium allocation drastically. 

\begin{ex}\label{ex4}
 Let  $\Omega=\{1,2\}$.  
Let the set of  priors be $\mathbb P_\epsilon=\{ p\in \Delta: p_1\in [\frac{1}{2}-\epsilon,\frac{1}{2}+\epsilon]\}$ for some  $\epsilon\in[0,1/2)$.

For $\epsilon>0$,  a consumption plan is ambiguity--free in mean if and only if it is  full insurance;  we have $\mathbb L^{\mathbb P}=\{c\in\mathbb{X}: c_1=c_2\}$. 

 Let there be no aggregate ambiguity, without loss of generality  $e=1$ in both states.
Let there be two agents $I=2$ (with multiple prior utilities as stated above) and uncertain endowments, e.g. $e^1=(1/3,2/3)$ and $e^2=(2/3,1/3)$.

In a Knight--Walras equilibrium, the state price has to be strictly positive because of strictly monotone utility functions. 
Since we have two agents, the budget constraint implies that 
$$0= \EE \psi (c^1-e^1)= \EE \psi (c^2-e^2)$$
or
$$ 0=\EE \psi (c^1-e^1) = \EE (-\psi (c^1-e^1))\,.$$
Hence, $\psi (c^1-e^1)$ is mean--ambiguity free, thus constantly equal to zero here. Since $\psi$ is strictly positive, $c^1=e^1$ follows. 
There is no trade in Knight--Walras equilibrium for every $\epsilon>0$. 
In sharp contrast, agents achieve full insurance in every Arrow--Debreu equilibrium of any linear economy $\cE^{\{P\}}$.
\end{ex}

The example uses the fact that we are in a simple world with two states and two agents. In general, the situation will be more involved. Nevertheless, the discontinuity when passing from a risk economy to $\cE^{\{P\}}$ to a Knightian economy $\cE^\EE$ remains.

Let us now consider economies of the form 
$$\mathcal{E}^{\mathbb{E}}=\left( I, \left(e^i,U^i_\P \right)_{i=1\ldots,I}, \mathbb E^\P \right)$$  with strictly positive initial endowment allocation $e=(e^1,\ldots, e^I)\in \mathbb{X}_{++}^I$. Here, $\EE^\PP$ denotes the Knightian expectation induced by the set of priors $\PP$. We assume that aggregate endowment is constant, $\bar{e}= \sum_{i\in I} e^i \in \mathbb R_{++}$.

Let  $\mathbb{K}(\Delta)$ be the set of closed and convex subsets of $\operatorname{int}(\Delta)$ equipped with   the usual  Haussdorff metric $\mathtt{d}_H$.
 Define the \textit{Knight--Walras equilibrium correspondence} $\mathcal{KW}: \mathbb{K}(\Delta)\times  \mathbb{X}_+^I\Rightarrow \mathbb{X}_+^{I+1}$ via 
$$\mathcal{KW}(\mathbb{P})=\Big\{ (\psi,c)\in \mathbb{X}_+^{I+1}: (c,\psi) \textnormal{ is a KW--equilibrium in }\mathcal{E}^{\mathbb{P}} \Big\}.$$
According to Theorem \ref{MAIN}, the set of  KW--equilibria $\mathcal{KW}(\mathbb{P})$ in  the economy  is nonempty.

\begin{theorem}
Let $\mathbb{P}: [0, 1)\rightarrow \mathbb{K}(\Delta)$ be a continuous  correspondence with $ \mathbb{P}(0)= \{P_0\} $ for some $P_0\in \operatorname{int}(\Delta)$. For $0<\epsilon<1$, assume $P_0 \in \operatorname{int} \mathbb{P}({\epsilon}) $. For $0\le \epsilon <1$, define the Knightian expectation  $\mathbb E_\epsilon X = \EE^{\PP(\epsilon)} X =  \max_{P \in \PP(\epsilon)} E^P X$.

The Knight--Walras equilibrium correspondence 
$$ \epsilon \mapsto \mathcal{KW}(\mathbb{P}(\epsilon), e) $$
is discontinuous in zero. 
\end{theorem}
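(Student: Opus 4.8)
The plan is to contrast two regimes as sharply as possible: for every $\epsilon>0$ the \emph{only} Knight--Walras equilibrium allocation is no trade, $c=e$, whereas at $\epsilon=0$ the economy collapses to a standard risk economy whose equilibrium allocation is full insurance and hence, for a non--degenerate endowment, differs from $e$. Since the equilibrium allocations stay pinned at $e$ for all $\epsilon>0$ but jump to a full--insurance allocation at $\epsilon=0$, the correspondence cannot be lower hemicontinuous at $0$, and in particular is discontinuous there.

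First I would establish the no--trade result for a fixed $\epsilon>0$. Existence of an equilibrium $(\psi,c)$ is guaranteed by Theorem \ref{MAIN}, since each $U^i_{\mathbb P}$ is concave, norm continuous, monotone and locally non--satiated, so Assumption \ref{A2} holds. Local non--satiation makes every budget constraint bind, $\mathbb E_\epsilon\,\psi(c^i-e^i)=0$, and market clearing together with Walras' law gives $\sum_i(c^i-e^i)=0$. Now I repeat the computation behind Corollary \ref{cor1}, but for an \emph{arbitrary} prior $P\in\mathbb P(\epsilon)$: since $\psi\sum_i(c^i-e^i)=0$ pointwise, summing $E^P\psi(c^i-e^i)$ over $i$ yields $0$, while each summand is $\le \mathbb E_\epsilon\psi(c^i-e^i)=0$; hence each $E^P\psi(c^i-e^i)=0$. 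As $P$ was arbitrary, every net--trade value $\xi^i=\psi(c^i-e^i)$ has constant (zero) expectation across $\mathbb P(\epsilon)$, i.e. $\xi^i\in\mathbb L^{\mathbb P(\epsilon)}$.

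The decisive step is then a dimension count that strengthens Lemma \ref{LemmaL}. Because $P_0\in\operatorname{int}\mathbb P(\epsilon)$, the set $\mathbb P(\epsilon)$ is full--dimensional in the simplex, so the differences $P-P'$ with $P,P'\in\mathbb P(\epsilon)$ span the entire tangent hyperplane $\{v\in\mathbb X:\sum_\omega v_\omega=0\}$. A plan that is orthogonal to all such $v$ must be constant, so $\mathbb L^{\mathbb P(\epsilon)}$ consists exactly of the constant plans. Combined with the binding budget $\mathbb E_\epsilon\xi^i=0$ this forces $\xi^i\equiv 0$; and since strict monotonicity of $u^i$ forces the equilibrium state price to satisfy $\psi>0$ in every state (otherwise an agent could raise consumption for free in a zero--price state, contradicting optimality), I conclude $c^i=e^i$ for all $i$. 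Hence the unique equilibrium allocation for every $\epsilon>0$ is no trade.

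Finally I would treat $\epsilon=0$ and assemble the discontinuity. At $\epsilon=0$ the prior set is the singleton $\{P_0\}$, so $U^i_{\{P_0\}}=E^{P_0}u^i$ and $\mathcal E^{\mathbb P(0)}$ is an ordinary Arrow--Debreu economy under risk; with no aggregate uncertainty and strictly concave $u^i$ its equilibrium allocation is full insurance (the Borch/Billot--Chateauneuf characterization recalled above), i.e. each $c^i$ is constant with $\sum_i c^i=\bar e$. Whenever $e$ is not itself full insurance---the case of economic interest, illustrated in Example \ref{ex4}---this allocation differs from $e$. Consequently, for any sequence $\epsilon_n\downarrow 0$, every selection $(\psi_n,c_n)\in\mathcal{KW}(\mathbb P(\epsilon_n),e)$ has $c_n=e$, which cannot converge to the full--insurance allocation of any $(\psi_0,c_0)\in\mathcal{KW}(\mathbb P(0),e)$, so lower hemicontinuity fails at $0$. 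The only genuinely non--routine point is the linear--algebra step collapsing $\mathbb L^{\mathbb P(\epsilon)}$ to the constants; the remaining ingredients (verifying Assumption \ref{A2}, the strict positivity of $\psi$, and the full--insurance characterization at $\epsilon=0$) are standard. I would also flag explicitly that the statement presumes $e$ is not already a full--insurance allocation.
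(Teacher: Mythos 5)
Your proof is correct and follows essentially the same route as the paper's: for $\epsilon>0$ you show every equilibrium net-trade value is ambiguity-free in mean, use the full-dimensionality of $\mathbb{P}(\epsilon)$ around $P_0$ to conclude such values are constant and hence zero, and contrast the resulting no-trade equilibrium with the full-insurance allocation at $\epsilon=0$. Your closing caveat --- that the discontinuity requires $e$ not to be a full-insurance allocation already --- is a fair observation about an implicit assumption that the paper's proof also relies on but does not state.
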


\begin{proof}
For $\epsilon=0$, we are in an Arrow--Debreu economy without aggregate uncertainty. As a consequence, $\mathcal{KW}(\mathbb{P}(0), e)$ contains only   full insurance allocations.

Fix $\epsilon>0$. Let us first show that a  mapping $X : \Omega \to \mathbb R$ is $\PP(\epsilon)$--ambiguity--free  in mean if and only if it is constant. 
Due to our assumptions, $\PP(\epsilon)$ contains a ball around $P_0$ of the form 
$$B_\eta(P_0)=\left\{ Q \in \Delta : \| Q-P_0 \| < \eta \right\}$$ for some $\eta>0$. We use here, without loss of generality, the maximum norm in $\mathbb{R}^\Omega$.

Suppose $E^Q X = c$ for some $c \in \mathbb R$ and all $Q \in \PP(\epsilon).$  Let $1 = (1, 1, \ldots,1) \in \mathbb{X}$ denote the vector with all components equal to $1$.  Let $Z \in \mathbb{X}$ satisfy $Z \cdot 1 = 0$ with $\| Z \|=1$. Then $P_0 + \tilde{\eta} Z \in B_\eta(P_0) \subset \PP(\epsilon)$ for all $0<\tilde{\eta} < \eta$.  Hence, we have
$$c = E^{P_0 + \eta Z} X = E^{P_0} X + \tilde{\eta} Z \cdot X\,.$$ As $0<\tilde{\eta} < \eta$ is arbitrary, $Z \cdot X=0$ for all $Z$ with norm $1$ and $Z \cdot 1 =0$ follows. By linearity, this extends to  all $Z$ with $Z \cdot 1=0$; it follows that $X$ is a multiple of $1$, hence constant.

In the next step, we show that  $ (\psi,c)\in \mathcal{KW}(\mathbb{P}(\epsilon)) $ implies $c=e$. Let $(\psi,c)$ be a Knight--Walras equilibrium for the economy $\cE^{\P(\epsilon)}$. Let $\xi^i = \psi(c^i-e^i)$ be the value of net trade for agent $i$. Then $\sum \xi^i=0$ by market clearing. 
As the utility functions are strictly monotone, the budget constraint is binding, so $\EE_\epsilon \xi^i = 0$ for all $i$. From subadditivity, we get $\EE_\epsilon (- \xi^i ) \ge 0$. On the other hand, from market clearing, subadditivity, and the binding budget constraint, 
$$ \EE_\epsilon [-\xi^i]= \EE_\epsilon \sum_{j \not= i} \xi^j \le \sum_{j \not= i} \EE_\epsilon \xi^j = 0 \,.$$
We conclude that $\xi^i$ is  ambiguity--free in mean, thus constant. Due to the budget constraint, $\xi^i=0$. As state prices must be strictly positive in equilibrium due to strictly monotone  utility functions, we conclude that $c^i=e^i$.

The Knight--Walras equilibrium correspondence is thus discontinuous in zero.
\end{proof}

The previous result shows that a tiny amount of Knightian uncertainty can substantially change equilibria. We now consider the opposite case of growing Knightian uncertainty and impose no assumption on the aggregate endowment $\bar e=\sum e^i$. We show that no trade is the only equilibrium if Knightian uncertainty is large enough, thus generalizing our initial example \ref{ex2}.

Next we state a simple result on uniqueness of Knight--Walras equilibria, when no--trade is an equilibrium. 

\begin{lem}\label{notrade}
If $(\psi, e)$ is a Knight--Walras equilibrium, then $e$ is the unique Knight--Walras equilibrium allocation.
\end{lem}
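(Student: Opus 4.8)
The plan is to take an arbitrary Knight--Walras equilibrium $(\hat\psi,\hat c)$ and prove $\hat c=e$. The decisive structural feature of this section is that the Bernoulli indices $u^i$ are strictly concave, so each $U^i_\PP(c)=\min_{P\in\PP}E^Pu^i(c)$, being a pointwise minimum of strictly concave functions, is again strictly concave. Consequently each agent's maximization over the convex budget set $\mathbb B(\cdot,e^i)$ has at most one solution, and this single--valuedness of demand is what will ultimately turn an equality of utilities into an equality of plans.

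I would first record the standard equilibrium facts: strict monotonicity makes the budget bind, $\mathbb E\hat\psi(\hat c^i-e^i)=0$ for all $i$, and local non--satiation gives market clearing with equality, $\sum_i(\hat c^i-e^i)=0$. Because the endowment $e^i$ is always affordable ($\mathbb E[\hat\psi\cdot0]=0$), optimality of $\hat c^i$ forces $U^i(\hat c^i)\ge U^i(e^i)$ for every $i$. The reduction step is then: it suffices to prove the reverse inequalities $U^i(\hat c^i)\le U^i(e^i)$. Indeed, equality $U^i(\hat c^i)=U^i(e^i)$ together with the fact that $e^i$ lies in $\mathbb B(\hat\psi,e^i)$ exhibits $e^i$ as a second maximizer of $U^i$ over that budget set, whence single--valuedness gives $\hat c^i=e^i$.

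To obtain $U^i(\hat c^i)\le U^i(e^i)$ I would use the hypothesis that $(\psi,e)$ is an equilibrium, i.e. that $e^i$ is optimal in $\mathbb B(\psi,e^i)$. If some agent had $U^i(\hat c^i)>U^i(e^i)$, then $\hat c^i\notin\mathbb B(\psi,e^i)$, so $\mathbb E\psi(\hat c^i-e^i)>0$; for the remaining agents a midpoint argument using strict concavity still yields $\mathbb E\psi(\hat c^i-e^i)\ge0$. Thus a single strictly--better--off agent would produce $\sum_i\mathbb E\psi(\hat c^i-e^i)>0$, which I want to play against market clearing $\sum_i\psi(\hat c^i-e^i)=0$.

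This last aggregation is exactly the main obstacle. Under a sublinear expectation one cannot simply add the individual relations: subadditivity gives $0=\mathbb E[\sum_i\psi(\hat c^i-e^i)]\le\sum_i\mathbb E\psi(\hat c^i-e^i)$, which points the wrong way and is consistent with a nonzero net trade, and moreover the nonnegativity lives at the no--trade price $\psi$ while the binding budgets live at $\hat\psi$. The device that removes both difficulties is that every equilibrium net--trade value is ambiguity--free in mean: exactly as in the proof of Theorem~\ref{ThmAD=KW}, market clearing, the binding budgets and subadditivity give $\mathbb E[-\hat\psi(\hat c^i-e^i)]=0$, hence $E^P\hat\psi(\hat c^i-e^i)=0$ for all $P\in\PP$. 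Along such ambiguity--free directions the pricing functional is linear, so the classical Walrasian cancellation applies and forces each net--trade value to vanish; with $\psi$ (and $\hat\psi$) strictly positive this gives $\hat c^i=e^i$ for every $i$. I expect the genuinely delicate point to be precisely this reconciliation of the two equilibrium prices, which is why isolating the ambiguity--free subspace $\mathbb L^\PP$, rather than arguing pointwise, is the crux.
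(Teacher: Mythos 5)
Your architecture is fine up to the very last step: the reduction to showing $U^i(\hat c^i)\le U^i(e^i)$, the single--valuedness of demand from strict concavity, and the midpoint argument at the no--trade price $\psi$ showing that any agent with $U^i(\hat c^i)>U^i(e^i)$ has $\mathbb E[\psi(\hat c^i-e^i)]>0$ while the others have $\mathbb E[\psi(\hat c^i-e^i)]\ge 0$ are all correct. The gap is in the device you invoke to close the aggregation. The ambiguity--freeness you derive, $E^P[\hat\psi(\hat c^i-e^i)]=0$ for all $P\in\mathbb P$, concerns the net--trade values weighted by the \emph{second} equilibrium's price $\hat\psi$; it follows from that equilibrium's own binding budgets and market clearing alone and uses nothing about $(\psi,e)$ being an equilibrium. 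It does not ``force each net--trade value to vanish'': a plan in $\mathbb L^{\mathbb P}$ with zero mean need not be the zero plan ($\mathbb L^{\mathbb P}$ contains non--constant plans in general and is all of $\mathbb X$ when $\mathbb P$ is a singleton), so you cannot conclude $\hat\psi(\hat c^i-e^i)=0$ pointwise, hence not $\hat c^i=e^i$. That inference is only available under the extra fatness hypothesis of Example \ref{ex4} and the sensitivity theorem (a ball of priors around an interior point, so that $\mathbb L^{\mathbb P}$ reduces to the constants), which the present lemma does not assume. Nor does it license the Walrasian cancellation you actually need: the inequalities to be summed are $\mathbb E[\psi(\hat c^i-e^i)]\ge 0$ (one strict) at the price $\psi$, and for $\mathbb E$ to act linearly on their sum you would need $\psi(\hat c^i-e^i)\in\mathbb L^{\mathbb P}$, not $\hat\psi(\hat c^i-e^i)\in\mathbb L^{\mathbb P}$; nothing in your argument relates the two weightings unless $\psi$ and $\hat\psi$ happen to be proportional. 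So the ``reconciliation of the two equilibrium prices'' that you correctly single out as the crux is flagged but never actually performed, and the proof does not close.

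For contrast, the paper's proof is agent--by--agent and never aggregates: working entirely at the price $\psi'$ supporting the putative second equilibrium $x$, it first shows via the perturbation $x^j+\epsilon e^j$ that the budget binds, $\mathbb E[\psi'(x^j-e^j)]= 0$ for every trading agent, and then plays the strictly preferred convex combination $\alpha x^k+(1-\alpha)e^k$, whose cost is $\alpha\,\mathbb E[\psi'(x^k-e^k)]=0$ by positive homogeneity, against optimality; market clearing and the subspace $\mathbb L^{\mathbb P}$ never enter. If you want to rescue your aggregate route you must either establish $\psi(\hat c^i-e^i)\in\mathbb L^{\mathbb P}$ (which does not follow from the one--sided inequalities $\mathbb E[\psi(\hat c^i-e^i)]\ge 0$, in contrast to the situation in Theorem \ref{ThmAD=KW} where all budget constraints hold with equality at the \emph{same} price) or abandon aggregation and argue at the single price $\psi'$ as the paper does.
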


\begin{proof}
Suppose there is  another Knight--Walras equilibrium allocation $(\psi',x)$ with $\emptyset\neq \mathbb{J}=\{i\in \mathbb{I}: x^i\neq e^i\}$. We have $U^j(x^j)\geq U^j(e^j)$ for all $j\in \mathbb{J}$.% If not, there would be a $k\in J$ with  $U^k(e^k)>U^k(x^k)$. This implies $0= \mathbb{E}0= \mathbb{E}[\psi'(e^i-e^i)]>0$, a contradiction by the constant preserving property of $\mathbb{E}$.

We  show $ \mathbb{E}[\psi'(x^j-e^j)]>0$ for all $j\in \mathbb{J}$, which contradicts the budget feasibility of $x^j$. Take some $\epsilon>0$ and note that $U^j(x^j+\epsilon e^j)>U^j(x^j)$ by  strict monotonicity. As $x^j$ is optimal in the budget set, we obtain   $ \mathbb{E}[\psi'(x^j + \epsilon e^j -e^j)]>0$. Letting $\epsilon$ to zero, we have $ \mathbb{E}[\psi'(x^j- e^j)]\geq0$. 
Now suppose $ \mathbb{E}[\psi'(x^k- e^k)]=0$  for some $k\in \mathbb{J}$. Since $U^k$ is strictly concave, we derive for any $\alpha\in(0,1)$
$$
U^k(\alpha x^k+(1-\alpha)e^k)> \alpha U^k( x^k)  + (1-\alpha) U^k(e^k)\geq U^k(e^k).
$$
We now obtain 
\begin{eqnarray*}
0&<& \mathbb{E}\left[\psi'\left(\alpha x^k+ (1-\alpha)e^i- e^k\right)\right]\\
&=&  \mathbb{E}\left[\psi' \alpha ( x^k- e^k)\right]\\
&=& \alpha \mathbb{E}\left[\psi'  ( x^k- e^k)\right]\\
&\leq&0\,,
\end{eqnarray*}
a contradiction.
\end{proof}

Next we increase the Knightian uncertainty  in the economy  $\mathcal{E}^\mathbb{P}$. 
As the following result shows, if ambiguity  becomes  sufficiently large  then  there is no trade in equilibrium.  Recall that we keep the standing assumption on multiple prior utility functions. 

\begin{thm}
   If ambiguity is large, every Knight--Walras--equilibrium is a no--trade equilibrium:
There is a $\mathbb{P}'\in \mathbb{K}(\Delta)$ such that for every  $\mathbb{P}''\in  \mathbb{K}(\Delta)$ with  $\mathbb{P}''\supset \mathbb{P}' $ we have  
$$\mathcal{KW}(\mathbb{P}'') = \mathbb X_{\psi,P''} \times \{e\}$$
for
$\mathbb X_{\psi,P''} = \left\{ \psi \in \mathbb X_{++}: {u^i}'(e^i)\cdot \argmax{P\in \mathbb{P}}E^P[u^i(e^i)]\cap \psi\cdot \mathbb{P}''  \:\:\forall i\in\mathbb{I}\right\} 
$.
\end{thm}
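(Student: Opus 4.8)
The plan is to fix once and for all a threshold $\mathbb P'\in\mathbb K(\Delta)$ that is a closed (relative) ball around some interior point $P_0\in\operatorname{int}(\Delta)$, so that every $\mathbb P''\supset\mathbb P'$ has nonempty relative interior in $\Delta$, and then to prove the statement in two stages: first that the only equilibrium allocation is no trade, and second that the admissible prices are exactly $\mathbb X_{\psi,P''}$. Throughout I would use that, by Theorem \ref{MAIN}, an equilibrium of $\mathcal E^{\mathbb P''}$ exists (the multiple prior utilities $U^i_{\mathbb P''}=\min_{P\in\mathbb P''}E^Pu^i$ are concave, norm continuous, monotone and locally nonsatiated, so Assumption \ref{A2} holds), whence $\mathcal{KW}(\mathbb P'')\neq\emptyset$ and it only remains to pin down the shape of this set.

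For the allocation I would reproduce the argument already used in Example \ref{ex4} and in the discontinuity theorem. Let $(\psi,c)\in\mathcal{KW}(\mathbb P'')$ be arbitrary and set $\xi^i=\psi(c^i-e^i)$. Strict monotonicity forces $\psi\gg 0$ and makes every budget constraint bind, so $\EE^{\mathbb P''}\xi^i=0$; market clearing gives $\sum_i\xi^i=0$. Subadditivity yields $\EE^{\mathbb P''}(-\xi^i)\geq 0$ on one hand, and on the other, using $-\xi^i=\sum_{j\neq i}\xi^j$,
$$\EE^{\mathbb P''}(-\xi^i)=\EE^{\mathbb P''}\Big(\sum_{j\neq i}\xi^j\Big)\leq\sum_{j\neq i}\EE^{\mathbb P''}\xi^j=0.$$
Hence $\EE^{\mathbb P''}\xi^i=\EE^{\mathbb P''}(-\xi^i)=0$, so $\xi^i$ is $\mathbb P''$--ambiguity--free in mean. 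Since $\mathbb P''$ contains the ball $\mathbb P'$, the computation in the proof of the discontinuity theorem (the relative interior of $\mathbb P''$ being nonempty) shows that any such plan is constant; the binding budget forces that constant to be $0$, and $\psi\gg 0$ then gives $c^i=e^i$. Thus every equilibrium has allocation $e$, and Lemma \ref{notrade} confirms uniqueness of the allocation; in particular $(\psi,e)$ is an equilibrium for its supporting price.

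It remains to identify which $\psi$ support no trade. Fixing $\psi\gg 0$, the pair $(\psi,e)$ is an equilibrium if and only if each $e^i$ maximizes $U^i_{\mathbb P''}$ over $\mathbb B(\psi,e^i)$. Because $U^i_{\mathbb P''}$ is concave, the budget set is convex, and $e^i\in\operatorname{int}\mathbb X_+$ is a Slater point, this optimality is equivalent to a first--order condition: some supergradient of $U^i_{\mathbb P''}$ at $e^i$ is a positive multiple of a subgradient of the constraint functional $g_i(c)=\EE^{\mathbb P''}[\psi(c-e^i)]$ at $e^i$. Here I would compute both ingredients. Since $g_i$ is the maximum over $P\in\mathbb P''$ of the linear maps $c\mapsto E^P[\psi(c-e^i)]$, all vanishing at $c=e^i$, its subdifferential is the whole set $\psi\cdot\mathbb P''=\{(\psi(\omega)P(\omega))_\omega:P\in\mathbb P''\}$. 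By the envelope (Danskin) theorem for the concave function $U^i_{\mathbb P''}=\min_{P}E^Pu^i$, the superdifferential at $e^i$ is the convex hull of the gradients ${u^i}'(e^i)\cdot Q$ over the worst--case priors $Q\in{\arg\min}_{P\in\mathbb P''}E^Pu^i(e^i)$. Matching these up to a positive scalar yields precisely the nonemptiness of the intersection ${u^i}'(e^i)\cdot{\arg\min}_{P\in\mathbb P''}E^Pu^i(e^i)\cap\mathbb R_{++}(\psi\cdot\mathbb P'')$ that defines $\mathbb X_{\psi,P''}$, and combined with the previous stage this gives $\mathcal{KW}(\mathbb P'')=\mathbb X_{\psi,P''}\times\{e\}$.

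The hard part will be the price characterization rather than the no--trade conclusion. Two points need care. First, the superdifferential formula for $U^i_{\mathbb P''}$ must be justified at the \emph{non--constant} point $e^i$, where the set of active (worst--case) priors is a proper face of $\mathbb P''$ rather than all of it; this is exactly where the factor ${u^i}'(e^i)\cdot{\arg\min}_{P}E^Pu^i(e^i)$ enters (I expect the $\arg\max$ in the printed definition to be a slip for the minimizing, i.e.\ worst--case, priors of the maxmin criterion). Second, one must confirm that the first--order condition is both sufficient and necessary: sufficiency is immediate from concavity, while necessity requires a separating--hyperplane / KKT argument that invokes the Slater point $e^i\gg 0$ to exclude a degenerate multiplier, together with the positive scaling that reconciles the normalizations of ${u^i}'(e^i)\cdot Q$ and $\psi\cdot P$.
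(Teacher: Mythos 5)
Your proposal is correct, but it reaches the conclusion by a genuinely different route than the paper. The paper's proof is constructive on the existence side: for each agent it takes a strictly positive normal $\pi^i$ supporting the better--off set of $U^i_{\PP}$ at $e^i$, normalizes it to a prior $P^{\pi^i}=\pi^i/\Vert\pi^i\Vert$, and chooses $\mathbb{P}'$ large enough to contain all of these agent--specific supporting priors; then $\mathbb{B}^{\mathbb{P}''}(\psi,e^i)\subset\mathbb{B}^{P^{\pi^i}}(\psi,e^i)$ makes $e^i$ optimal in the Knightian budget set, and Lemma \ref{notrade} upgrades this single no--trade equilibrium to uniqueness of the allocation. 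That is where ``ambiguity is large'' actually bites, and it is also why $\mathbb X_{\psi,P''}$ is defined through the intersection of the supporting priors with $\psi\cdot\mathbb{P}''$. You instead take $\mathbb{P}'$ to be a full--dimensional ball, kill all trade at once by the ambiguity--free--in--mean argument recycled from the discontinuity theorem of Section 6 (you are right that that argument nowhere uses constancy of $\bar e$), and import existence from Theorem \ref{MAIN} rather than constructing an equilibrium, which makes Lemma \ref{notrade} redundant. Your route buys a sharper quantitative statement (no trade already for arbitrarily small but full--dimensional $\mathbb{P}''$) and a more careful treatment of the price set: your KKT/superdifferential computation actually verifies the characterization of $\mathbb X_{\psi,P''}$, which the paper's proof only asserts via ``all individual first order conditions are satisfied,'' and you correctly diagnose the $\arg\max$/$\arg\min$ slip and the missing positive scalar and nonemptiness condition in the printed definition. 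The paper's route, in exchange, also works for lower--dimensional choices of $\mathbb{P}'$ (e.g.\ a polytope spanned by the $P^{\pi^i}$) and makes the economic content of $\mathbb X_{\psi,P''}$ transparent. Two small points to tidy: your no--trade step, like the paper's own proofs of Corollary \ref{cor1} and the discontinuity theorem, silently uses market clearing with equality rather than the free--disposal feasibility in the definition of equilibrium; and $e^i$ itself is not a Slater point for the constraint $\EE[\psi(c-e^i)]\le 0$ (it is active there) --- strict positivity of $e^i$ is needed only to produce one, e.g.\ $e^i/2$, and to make the normal cone of $\mathbb X_+$ at $e^i$ trivial.
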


\begin{proof}Since utility is strictly increasing,  an equilibrium state price must be strictly positive. 

Under full Knightian uncertainty, $\PP=\Delta$, the budget set is
 $[0,e^i]$.
By  strict monotonicity and convexity of preferences,  the better--off set $\{x\in \mathbb{X}_+: U^i_\P(x)\geq U^i_\P(e^i)\}$ can be supported by a hyperplane with a strictly positive normal vector $\pi^i$.  Since $U^i_\P$ is of multiple prior type,  an increase of $\mathbb{P}$ to $\mathbb{P}'\in\mathbb{K}(\Delta)$  let the better--off set $\{x\in \mathbb{X}_+: U^i_{\P'}(x)\geq U^i_{\P'}(e^i)\}$ shrink and  $P^{\pi^i}= \frac{\pi^i}{\Vert \pi^i\Vert}$ remains a supporting prior.

For large $\mathbb{P}'$ such that  $P^{\pi^i}\in\mathbb{P}'$ for all $i\in \mathbb{I}$,    all individual  first order conditions are satisfied.  $e^i$ is then optimal in $\mathbb{B}^{\mathbb{P}' }(1,e^i)$. A larger $\mathbb{P}''\supset  \mathbb{P}'$ leave this result unchanged.  An application of Lemma \ref{notrade} establishes  uniqueness of  the no--trade equilibrium.
\end{proof}

 \section{Conclusion}
 Knightian uncertainty leads naturally to nonlinear expectations derived from a set of priors.  This led us to study a new equilibrium concept, Knight--Walras equilibrium, where prices are sublinear. 
We  established existence of such equilibrium points and studied its efficiency properties. While one cannot expect fully efficient allocations, in general, Knight--Walras equilibrium allocations satisfy a restricted efficiency ctriterion: when the social planner is restricted to ambiguity--neutral trades, she cannot improve upon a Knight--Walras equilibrium allocation.

The introduction of Knightian friction on the price side rather than the utility side can have strong effects. In a world without aggregate uncertainty, no trade equilibria result even with a tiny amount of uncertainty. 
The abrupt change of equilibria with respect to Knightian uncertainty has potentially strong implications for consumption--based asset pricing results which rely on the assumption of probabilistically sophisticated agents and markets. In dynamic and continuous--time models, these questions remain to be explored.

\begin{appendix}

\section{Existence}

We begin with an investigation of the Knight-Walras correspondence $\mathbb{B}$ in (\ref{budget}).  To prove the continuity of our budget correspondence, we follow the lines of \cite{debreu1993existence}. Set $[0,x]=\{y\in\mathbb{X}: 0\leq y \leq x \}\subset \mathbb{X} $. $\bar  e=\sum_i e^i$ denotes the aggregate endowment and $\Delta=\Delta_\mathbb{X}$ is the simplex in $\mathbb{X}$.

\begin{lem}\label{budprop}
\begin{enumerate}
\item The budget sets $\mathbb{B}(\psi,e^i)$ in (\ref{budget}) with $\psi \geq 0 $ are nonempty, closed, and convex. If $\psi$ and $e^i$ are strictly positive, $\mathbb{B}(\psi,e^i) $  is also compact.
\item The Knight-Walras budget correspondence  $\mathbb{B}(\cdot, e^i)$ is homogeneous of degree zero.
\item Let  the consumption set be $[0,2\bar e]$ and   fix any $e^i\in\mathbb{X}_{++}$.  Then the correspondence $\psi \mapsto \bar{\mathbb{B}}(\psi, e^i)= \mathbb{B}(\psi, e^i)\cap [0,2\bar e]$ is  continuous at any $\psi\in\Delta$. 
\end{enumerate}
\end{lem}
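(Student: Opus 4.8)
The plan is to treat the three claims in order, using throughout the sublinearity of $\mathbb{E}$ (properties 1--5) together with the representation $\mathbb{E}X = \max_{P\in\mathbb{P}} E^P X$, which makes $\mathbb{E}$ a finite convex, hence norm-continuous (indeed Lipschitz) functional on $\mathbb{X}$. For part (1), I would first record that the map $G_\psi(c) = \mathbb{E}\psi(c-e^i)$ is convex (the affine map $c\mapsto \psi(c-e^i)$ composed with the convex $\mathbb{E}$) and continuous. Nonemptiness is clear since $0\in\mathbb{B}(\psi,e^i)$: by monotonicity $\mathbb{E}(-\psi e^i)\le \mathbb{E}0 = 0$. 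Closedness holds because $\mathbb{B}(\psi,e^i)=G_\psi^{-1}((-\infty,0])\cap\mathbb{X}_+$ is an intersection of closed sets, and convexity follows from convexity of $G_\psi$ and of $\mathbb{X}_+$. For compactness when $\psi,e^i$ are strictly positive, I would bound the set: subadditivity gives $\mathbb{E}\psi c \le \mathbb{E}\psi(c-e^i)+\mathbb{E}\psi e^i \le \mathbb{E}\psi e^i$ for feasible $c$; fixing one $P\in\mathbb{P}$ (full support) and using $c\ge 0$, $\psi>0$, each coordinate obeys $P(\omega)\psi(\omega)c(\omega)\le E^P\psi c\le \mathbb{E}\psi c\le \mathbb{E}\psi e^i$, a uniform bound on $c(\omega)$. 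Closed and bounded yields compact. Part (2) is immediate from positive homogeneity: $\mathbb{E}(\lambda\psi)(c-e^i)=\lambda\,\mathbb{E}\psi(c-e^i)$ for $\lambda>0$, so the defining inequality is unaffected and $\mathbb{B}(\lambda\psi,e^i)=\mathbb{B}(\psi,e^i)$.

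The substance is part (3), which I would prove by establishing upper and lower hemicontinuity of $\bar{\mathbb{B}}(\cdot,e^i)$ at $\psi\in\Delta$ separately. Since $\bar{\mathbb{B}}$ takes values in the fixed compact set $[0,2\bar e]$, upper hemicontinuity reduces to the closed-graph property: taking $\psi_n\to\psi$, $c_n\in\bar{\mathbb{B}}(\psi_n,e^i)$ with $c_n\to c$, joint continuity of $(\psi',c')\mapsto \mathbb{E}\psi'(c'-e^i)$ passes the inequality $\mathbb{E}\psi_n(c_n-e^i)\le 0$ to the limit, while $[0,2\bar e]$ is closed, so $c\in\bar{\mathbb{B}}(\psi,e^i)$.

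The heart of the argument, and the main obstacle, is lower hemicontinuity, where strict positivity of $e^i$ and the relevance of $\mathbb{E}$ enter. The key observation is that $0$ is a \emph{cheaper point} uniformly in the price: for every $\psi\in\Delta$ the plan $\psi e^i$ lies in $\mathbb{X}_+\setminus\{0\}$, so relevance gives $\mathbb{E}\psi(0-e^i)=\mathbb{E}(-\psi e^i)<0$. Given $c\in\bar{\mathbb{B}}(\psi,e^i)$, I would move toward this point along $c^\lambda=(1-\lambda)c$; convexity of $\mathbb{E}$ gives $\mathbb{E}\psi(c^\lambda-e^i)\le (1-\lambda)\mathbb{E}\psi(c-e^i)+\lambda\,\mathbb{E}(-\psi e^i)<0$ for every $\lambda\in(0,1]$, and $c^\lambda\in[0,2\bar e]$ by convexity of the box.

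For $\psi_n\to\psi$ I would then set $\lambda_n=\min\{\lambda\in[0,1]:\mathbb{E}\psi_n(c^\lambda-e^i)\le 0\}$, which is well defined because $\lambda\mapsto \mathbb{E}\psi_n(c^\lambda-e^i)$ is convex, continuous, and strictly negative at $\lambda=1$ (relevance again, with $\psi_n e^i\ne 0$ for $n$ large). Putting $c_n=c^{\lambda_n}\in\bar{\mathbb{B}}(\psi_n,e^i)$, it remains to show $\lambda_n\to 0$: if a subsequence had $\lambda_{n_k}\to\lambda^\ast>0$, then at the boundary minimum $\mathbb{E}\psi_{n_k}(c^{\lambda_{n_k}}-e^i)=0$, and joint continuity would force $\mathbb{E}\psi(c^{\lambda^\ast}-e^i)=0$, contradicting the strict inequality of the previous paragraph. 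Hence $c_n\to c$, giving lower hemicontinuity, and combined with the closed graph, continuity of $\bar{\mathbb{B}}(\cdot,e^i)$ at $\psi$. The delicate point to get right is exactly this cheaper-point construction and the convexity argument forcing $\lambda_n\to 0$; the rest is routine continuity bookkeeping.
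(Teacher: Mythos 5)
Your proof is correct, and its overall architecture matches the paper's: convexity/closedness of the budget set from the sublinear structure of $\mathbb{E}$, degree-zero homogeneity from positive homogeneity of $\mathbb{E}$, upper hemicontinuity via the closed-graph property on the compact range $[0,2\bar e]$, and lower hemicontinuity via a cheaper-point argument that exploits relevance of $\mathbb{E}$ and strict positivity of $e^i$. The differences are in execution rather than strategy. For part (1) the paper writes $\mathbb{B}(\psi,e^i)=\bigcap_{P\in\mathbb{P}}\mathbb{B}^P(\psi,e^i)$ and inherits closedness, convexity and compactness from the linear budget sets, whereas you work directly with the sublevel set of the convex functional $G_\psi$ and give an explicit coordinatewise bound $c(\omega)\le \mathbb{E}[\psi e^i]/(P(\omega)\psi(\omega))$; your version is more self-contained where the paper appeals to ``standard arguments.'' For lower hemicontinuity the paper splits into the cases $\Psi(x-e^i)<0$ and $\Psi(x-e^i)=0$, uses the cheaper point $x'=e^i/2$, and approximates $x$ by the point of the segment $[x',x]$ nearest to $x$ on the $\psi_n$-budget boundary; you instead contract $c$ toward the cheaper point $0$, define $\lambda_n$ as the minimal contraction entering $\bar{\mathbb{B}}(\psi_n,e^i)$, and force $\lambda_n\to 0$ by a minimality-plus-continuity contradiction. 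Your variant handles both cases uniformly and avoids the paper's slightly delicate step of intersecting a line with the nonlinear budget boundary and taking an $\arg\min$ there, so it is, if anything, the tighter write-up; both arguments are instances of the classical Debreu continuity proof that the paper explicitly says it follows.
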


\begin{proof3}{\ref{budprop}}
\begin{enumerate}
\item Since $0,e^i\in \mathbb{B}(\psi,e^i)$, for every $\psi,e^i\in \mathbb{X}_+$, $\mathbb{B}$ is nonempty.   The budget set  $\mathbb{B}(\psi,e^i)$  is the intersection of  budget sets under linear prices of the form  $E^P[\psi \cdot]$, that is, 
$$
 \mathbb{B}(\psi,e^i)=\bigcap_{P\in\mathbb{P}} \mathbb{B}^P(\psi,e^i),
$$
where $ \mathbb{B}^P(\psi,e^i)= \left\{c \in \mathbb X_+ :  E^P[ \psi (c-e^i)] \le 0 \right\}$ denotes the closed and convex budget in an Arrow--Debreu economy under $\mathbb{P}=\{P\}$. The arbitrary intersection of convex (closed) sets is again convex (closed) and so is $ \mathbb{B}(\psi,e^i)$.

Standard arguments with linear price systems yield compactness of  $ \mathbb{B}^P(\psi,e^i)$ whenever $e^i(\omega),\psi(\omega)>0$ for all $\omega\in\Omega$.
 Since the arbitrary intersection of compact sets is again compact, the result then follows. 
\item  By definition, the nonlinear expectation is  positive homogeneous. The result then follows by the same arguments as in the case with linear price systems.
\item $[0,2\bar e]$ is a compact, convex,  nonempty   consumption set  in $\mathbb{X}=\mathbb{R}^\Omega$. 
We prove the continuity of $\bar {\mathbb{B}}:\Delta\Rightarrow [0,2\bar e]$.
 
To establish upper hemi-continuity, it suffices to show the closed graph property, since $\bar{\mathbb{B}}(\psi,e^i)$ is compact valued.
The graph of the  budget correspondence $gr(\bar{\mathbb{B}})=\{(\psi,x)\in \Delta\times [0,2\bar e]:x\in\bar{\mathbb{B}}(\psi,x) \}$  is closed since $\psi \mapsto  \max_{P\in\mathbb{P}}E^P\psi x$   is    continuous for all $x\in\mathbb{X}$, by  an application of Berge's maximum theorem. 
\\[.5em]
We show  lower hemi-continuity, i.e., if $(\psi_n,x_n)\rightarrow (\psi,x)$ in $\mathbb{X}\times \mathbb{X}$ and $x\in \bar{\mathbb{B}}(\psi,e^i)$ then there is a sequence $x_n\in [0,2\bar e]$ converging to $x$ and $x_n\in\bar{\mathbb{B}}(\psi_n,e^i)$, for every  $n\in\mathbb{N}$.
\newline
 Let us denote by $ \Psi_n$ the price system induced by $\psi_n\in\Delta$. We consider two cases.

\textnormal{case 1:} If $ \Psi(x-e^i)<0$, then for a large $\bar n$ we still have  $\Psi_{\bar n} (x-e^i)< 0$. We may take the following converging sequence
  \[
     x_n=\left\{\begin{array}{ll}x' \in \bar{\mathbb{B}}(\psi_n,e^i) \:\text{ arbitrary}, &\text{ if } n\leq \bar n \\
         x, & \:\:\:\:\:\:n>\bar n\end{array}\right. 
  \]
\textnormal{case 2:}  If $ \Psi(x-e^i)=0$, there is a $x'\in[0,2\bar e]$ such that $\Psi(x'-e^i)< 0$. Since $e^i>0$ by assumption,  take $x'=\frac{e^i}{2}$ and  since $ \mathbb{E}$  is  positive homogeneous, strictly monotone %\footnote{We need again $\mathbb{P}\subset int \Delta$; $\psi$ can be an extremal point, the maximzing prior for  $-\psi e^i$ gives no probability on the state where $\psi\neq 0$!!!!} 
and constant preserving, we get
\begin{eqnarray*}
\Psi(x'-e^i)= \frac{1}{2}\Psi(-e^i)=  \frac{1}{2}\mathbb{E}-\psi e^i< \mathbb{E} 0=0.
\end{eqnarray*}
For $n$ large, the intersection of $ \{y\in \mathbb{X}: \Psi_n(y-e^i)=0\}$ and  $\{y\in \mathbb{X}:\exists\lambda \in\mathbb{R}: y=\lambda x+(1-\lambda) x'\}$ is nonempty and denoted by $ \Psi_n^\cap$. 
 Since $ \Psi_n^\cap$ is  the closed   subset of a line,   $\bar x_n=\arg\min_{y\in \Psi_n^\cap} \Vert y-x\Vert $ is unique.
 Now, set
  \[
     x_n=\left\{\begin{array}{ll}\bar x_n, &\textit{ if } \bar x_n\in [x',x] \\
         x, & \textit{ else}\end{array}\right. 
  \]
By construction, we have $x_n \in {\mathbb{B}}(\psi_n,e^i)$ and $x_n\rightarrow x$ in $\mathbb{X}$.
\end{enumerate}
\end{proof3}

 Let us introduce the \textit{Knight--Walras demand correspondence}
\begin{eqnarray*}
X^i(\psi,e^i):=\textnormal{arg}\!\!\!\!\max_{x\in  {\mathbb{B}}(\psi,e^i)}U^i(x) 
\end{eqnarray*}
and the aggregate excess demand 
\begin{eqnarray}\label{Z}
z(\psi)=\sum_{i\in\mathbb{I}} X^i(\psi,e^i)-e^i=\left\{z=\sum_{i\in\mathbb{I}}  x^i -e^i: x^i\in  X^i(\psi,e^i), i\in\mathbb{I}  \right\}.
\end{eqnarray}
We collect, under our standing Assumption \ref{A2},  the following standard properties for the aggregate demand correspondence which we employ in the proof of Theorem \ref{MAIN}.
\begin{pro}\label{Demand} Suppose Assumption \ref{A2} holds for every agent in the economy and let $\Psi(\cdot)=\mathbb{E}[\psi\cdot]$ be a sublinear price system on $\mathbb{X}$. The correspondence $z: \mathbb{X}_+\Rightarrow  \mathbb{X}$ in (\ref{Z})
\\[-1.5em]
\begin{enumerate}
\item  is  upper hemi-continuous and nonempty compact convex valued.
\\[-1.5em]
\item  is  homogeneous of degree zero.
\\[-1.5em]
\item  satisfies the weak Walras law: $\Psi(z)\leq 0$, for every $z\in z(\psi)$.
%\\[-1.5em]
%\item  is bounded below, i.e. there is a constant  $K>0$ such that $z(\Psi)>-Ke$ for every $\Psi\in \mathtt{P}$.
\end{enumerate}
\end{pro}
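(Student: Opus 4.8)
The plan is to obtain parts (2) and (3) almost for free from the structural properties already recorded in Lemma \ref{budprop} together with the sublinearity of $\mathbb E$, and to concentrate the real work on part (1), where the continuity of the budget correspondence is combined with Berge's maximum theorem and then lifted to the aggregate. Throughout I would work on the truncated consumption set $[0,2\bar e]$ and with the compact-valued truncated budget correspondence $\bar{\mathbb B}(\cdot,e^i)$, which by Lemma \ref{budprop}(3) is continuous (both upper and lower hemi-continuous) at every $\psi\in\Delta$. Since each $U^i$ is norm continuous by Assumption \ref{A2}, Berge's maximum theorem applies directly and yields that the individual demand correspondence $X^i(\cdot,e^i)=\arg\max_{x\in\bar{\mathbb B}(\cdot,e^i)}U^i(x)$ is upper hemi-continuous and nonempty compact-valued. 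Convex-valuedness follows from the concavity of $U^i$ together with the convexity of the budget set established in Lemma \ref{budprop}(1): the maximizers of a concave function over a convex set form a convex set. Passing to $z(\psi)=\sum_{i\in\mathbb I}X^i(\psi,e^i)-e^i$, I would then invoke the fact that the Minkowski sum of finitely many upper hemi-continuous, nonempty, compact, convex-valued correspondences is again of this type, and that translating by the constant vector $\sum_i e^i$ preserves all four properties; this establishes part (1).

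Part (2) is immediate from the degree-zero homogeneity of the budget correspondence in Lemma \ref{budprop}(2): since $\mathbb B(\lambda\psi,e^i)=\mathbb B(\psi,e^i)$ for every $\lambda>0$, the maximization problem defining $X^i$ is literally unchanged, so $X^i(\lambda\psi,e^i)=X^i(\psi,e^i)$ and hence $z(\lambda\psi)=z(\psi)$. Part (3) follows from budget feasibility and subadditivity: any $z\in z(\psi)$ has the form $z=\sum_{i}(x^i-e^i)$ with $x^i\in X^i(\psi,e^i)$, so $\Psi(x^i-e^i)=\mathbb E[\psi(x^i-e^i)]\le 0$ for each $i$ by definition of the budget set, and subadditivity of $\Psi$ then gives $\Psi(z)=\Psi\big(\sum_i(x^i-e^i)\big)\le\sum_i\Psi(x^i-e^i)\le 0$. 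The fact that only the inequality $\Psi(z)\le 0$ (rather than the classical Walras equality) survives is precisely the imprint of sublinear pricing.

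The genuinely hard part is already discharged upstream: the delicate lower hemi-continuity of the budget correspondence proved in Lemma \ref{budprop}(3) is exactly the hypothesis Berge's theorem requires, so once that is in hand the argument for part (1) is routine bookkeeping. Within the proposition itself the only points needing care are the legitimacy of the truncation to $[0,2\bar e]$ — namely that restricting the consumption set does not discard demand relevant for the existence argument — and the standard but non-trivial verification that upper hemi-continuity is stable under Minkowski summation of compact-valued correspondences.
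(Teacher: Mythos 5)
Your proposal is correct and follows essentially the same route as the paper: continuity of the (truncated) budget correspondence from Lemma \ref{budprop} plus Berge's maximum theorem for part (1), inheritance of degree-zero homogeneity from the budget correspondence for part (2), and budget feasibility combined with subadditivity of $\Psi$ for part (3). You in fact supply slightly more detail than the paper does (the Minkowski-sum stability step and the explicit care about the truncation to $[0,2\bar e]$), but the argument is the same.
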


\begin{proof4}{\ref{Demand}}
\begin{enumerate}
\item The budget correspondence is convex valued and continuous, by Lemma \ref{budprop}. Applying Berge's Maximum theorem to each $X^i(\cdot)$ gives that $z(\cdot)$ is an upper hemi-continuous correspondence as well. The concavity of  $U^i$ gives us that $z(\cdot)$ is convex valued.
\item By Lemma \ref{budprop}, $\mathbb{B}$ is homogeneous of degree zero in $\psi$. This implies  that  each $X^i$ is also  homogeneous of degree zero, and so is $z(\cdot)$.
\item  By the budget constraint we have  $\Psi\left(x^i-e^i\right)\leq0$,  for every $x^i\in X^i(e^i,\Psi)$ and $i\in \mathbb{I}$. The sub-additivity of $\Psi$ then implies
\begin{eqnarray*}
\Psi(z )= \Psi\left(\sum_{i\in \mathbb{I}}x^i-e^i\right)\leq \sum_{i\in \mathbb{I}} \Psi\left(x^i-e^i\right)\leq0.
\end{eqnarray*}
%\item For each consumer, the net supply to the market is dominated by her initial endowment. Summation yields the result.
\end{enumerate}
\end{proof4}

\begin{proof2}{\ref{MAIN}}
By the definition of  $\mathbb E$, every price system $ \mathbb{E}[\psi \cdot]: \mathbb{X}\rightarrow \mathbb{R}$ can be written as 
\begin{eqnarray*}
 \mathbb{E}\psi X= \max_{P\in\mathbb{P}}E^P\psi X, \qquad X\in\mathbb{X}.
\end{eqnarray*}
%By the continuity of $P\mapsto E^P[\psi f^i]$, the price system is represented by $q\in C(\mathcal{P},\mathbb{R})^m$. Set $\hat q:=\max^i \max_{P\in\mathfrak{P}} q^i(P)$. Positive homogeneity allows to consider price systems $\frac{q}{\hat q}\in B(0,1)_+^m$, where $B(0,1)_+$ denotes the positve part of the unit ball of $C(\mathcal{P},\mathbb{R})$ which is 
%\\[.5em]
%\textnormal{claim: The set $Q=\{q^\psi\}_{\psi\in \Delta}$ is $\sigma (C(\mathcal{P};\mathbb{R})^m, \mathcal{M}(\mathcal{P})^m)$-compact.}
The budget set $\mathbb{B}(\psi,e^i)$ is in general not
 compact within $\mathbb{X}$, so we truncate $\mathbb{B}$ via $\overline{\mathbb{B}}(\psi,e^i)
=\mathbb{B}(\psi,e^i)\cap [0,2\bar e]$ and denote the corresponding truncated economy by
 $\overline{ \mathcal{E}}=\{[0,2\bar e], \bar U^i,e^i\}$.  $\bar U^i:[0,2\bar e]\rightarrow \mathbb{R}$ is the restriction of $U^i$ to $[0,2\bar e]$. 
We show first existence of an equilibrium in $\overline{ \mathcal{E}}$ and then check in step 6 and 7 that this $I+1$ tuple is also an equilibrium in ${ \mathcal{E}}$.

The  existence  proof of an equilibrium  in $\overline{\mathcal{E}}$ is divided into  five  steps. 
\begin{enumerate}
\item{\textit{ \: Continuity of the  Budget correspondence:}} By Assumption \ref{A2}, each initial endowment $e^i$ is strictly positive.  The continuity of the correspondence $\overline{\mathbb{B}}:\Delta\Rightarrow [0,2\bar e]$ follows  from  Lemma \ref{budprop}.3.
\item 
{\textit{ \: Demand correspondence:}}
Consider the (truncated) demand correspondence $\overline {X}^i:\Delta\Rightarrow [0,2\bar e]$. By step 1, $\overline{\mathbb{B}}(\cdot, e^i):\Delta\Rightarrow [0,2\bar e]$ is continuous, hence by Berge's maximum theorem the demand 
\begin{eqnarray*}
\overline {X}^i(\psi)= \textnormal{arg}\!\!\!\max_{x\in\overline{\mathbb{B}}
(\psi,e^i)}  \bar U^i\left(x\right)
\end{eqnarray*}
is upper hemi-continuous, compact and non empty valued, since $\overline{U}^i$ is continuous on $gr(\overline{\mathbb{B}})$. By the   concavity of $U^i$, $\overline {X}^i(\psi)$  is convex-valued.

\item { \textit{\: Walrasian Player:}} 
 Define the Walrasian price adjustment correspondence  $W: [0,2\bar e]^I\times \mathbb{P}\Rightarrow   \Delta $  via
\begin{eqnarray*}
W (  {x^1},\ldots,   {x^I}, P)= \textnormal{arg}\max_{\psi\in \Delta }E^{P}\left[\psi  \sum_{i\in\mathbb{I}}  \Big ( x^i-e^i\Big)\right].
\end{eqnarray*}
Again, by Berge's Maximum Theorem, the correspondence is  upper hemi-continuous and attains convex, compact and nonempty values, by the continuity of the linear expectation  in $(x^1,\ldots,x^I)$ and the  linearity  in $\psi$. 
 
\item {   \textit{\: Knightian Price Players:}} Define the Knightian adjustment  correspondence   $K:[0,2\bar e]^I\times \Delta\Rightarrow \mathbb{P} $ via 
$$K\left(   {x^1},\ldots,   {x^I}, \psi\right)= \textnormal{arg} \max_{P\in\mathbb{P}} E^{P}\left[\psi  \sum_{i\in\mathbb{I}}  \Big ( x^i-e^i\Big)\right].$$
Once gain, by Berge's Maximum Theorem, the correspondence is  upper hemi-continuous and attains convex, compact and nonempty values.
%, by the continuity of the maximand in $x^i$ and the  linearity in $P$.  
\item 
{ \: \: \textit{\: Existence of a Fixed-Point:}} Set $\overline{X}= \left(\overline{X}^1,\ldots, \overline{X}^I\right)$. Putting things together we have the combined correspondence
\begin{eqnarray*}
\left[K\! \overline{X}\!W\right]:\mathbb{P}\times [0,2\bar e]^I\times \Delta\Rightarrow \mathbb{P}\times [0,2\bar e]^I\times \Delta
\end{eqnarray*}
as a product of nonempty and compact--convex  valued upper hemi-continuous correspondences (see step 2, 3 and 4).  Consequently,   a fixed--point
\begin{eqnarray*}
\left(\bar P ,\bar x^1,\ldots \bar x^i, \bar \psi\right ) \in \left[K\! \overline{X}\!W\right]\left(\bar P,\bar x^1,\ldots \bar x^i, \bar \psi\right)
\end{eqnarray*}
exists 
by  an application of Kakutani's fixed-point theorem.
\item { \:  \textit{\: Feasibility:}} We check the feasibility of the fixed-point allocation $\bar x$.

By the budget constraint and  the sublinearity of $X\mapsto \mathbb{E}\bar\psi X$ (since  $\bar\psi\geq 0$), we derive by the definitions of $\bar\psi$, via the Walrasian price player correspondence $W$, and $\bar P$, via the Knightian price player correspondence $K$,
\begin{eqnarray}\label{feas}
0&\geq &\sum_i  \mathbb{E}\Big[\bar \psi(\bar x^i-e^i)\Big]\nonumber\\
&\geq &    \mathbb{E}^{\mathbb{P}}\Big[\bar \psi \sum_i(\bar x^i-e^i)\Big] \nonumber\\
&=&  {E}^{\bar{P}}\Big[\bar \psi \sum_i(\bar x^i-e^i)\Big]  \nonumber\\
&\geq&  {E}^{\bar{P}}\Big[ \psi \sum_i(\bar x^i-e^i)\Big] 
\end{eqnarray}
The first inequality follows from the definition of the budget set and $\bar x^i \in \bar X^i (\psi)$ for all $i\in \mathbb{I}$.
The last inequality holds for all $\psi \in \Delta$ and by the positive homogeneity of linear expectations, it holds   even for all $\psi \in \mathbb{X}_+$. Since $\bar \in \operatorname{int}(\Delta)$.  We have $l\left(\sum_{i\in\mathbb{I}}\bar x^i-e^i)\right)\leq 0$ for all positive linear form on $\mathbb{X}$. This implies $ \sum_{i\in\mathbb{I}} (\bar x^i-e^i)\leq 0$. 

%  Fix  an arbitrary $\psi'\in   \mathbb{X}_+$ and let $ P_{\psi'} \in K(   \bar x^1,\ldots,   \bar x^I, \psi')$. The Knightian adjustment correspondence $K$  of step 4 is well defined for all $\psi'\in\mathbb{X}_+$. We build on the computation in (\ref{feas}): The density $\frac{d P_{\psi'}}{d \bar P}$ of  the two probability measures  is positive in all states, we have  $\psi:=\frac{d P_{\psi'}}{d \bar P} \cdot \psi'   \in  \mathbb{X}_+$.  Plug  $\psi $ into  (\ref{feas}) and  we get\footnote{The assumption of mutually equivalent priors enters.-need more than {\color{blue}RELEVANT}}
%\begin{eqnarray*}
%0 &\geq&    {E}^{\bar{P}}\Big[ \psi \sum_i(\bar x^i-e^i)\Big] \\
%&=& {E}^{\bar{P}}\left[ \frac{d P_{\psi'}}{d \bar P} \cdot \psi' \sum_i(\bar x^i-e^i)\right]\\
%&=& {E}^{P_{\psi'}}\left[ \psi' \sum_i(\bar x^i-e^i)\right]\\
%&=& \mathbb{E}\left[ \psi' \sum_i(\bar x^i-e^i)\right].
%\end{eqnarray*}
%We have shown that for all $\psi'\in\mathbb{X}_+$, we have $\mathbb{E}\Big[ \psi' \sum_i(\bar x^i-e^i)\Big]\leq 0$.   Lemma \ref{neg} implies then  the  feasibility of the fixed point allocation, i.e.,  $ \sum_i(\bar x^i-e^i)\leq 0$.

For the feasibility of the equilibrium allocation, the truncation is irrelevant.  %, since $\bar x^i\leq \sum_i e^i$.% Therefore, $$\left(\frac{\bar {x_1}}{\lambda},\ldots, \frac{\bar {x^i}}{\lambda},  \bar \psi,\bar \lambda\right)$$ is an equilibrium in the truncated economy $\overline{\mathcal{E}^{F,\mathtt{P}}}$.
\item \textit{ Maximality in $\mathcal{E}^\mathbb{P}$:}
 Since $\bar x^i\in \overline {X}^i(\bar\psi)$, we have
\begin{eqnarray*}
\bar x^i\in \textnormal{arg}\!\!\!\!\!\!\!\!\max_{x\in\mathbb{B}(\psi,e^i)\cap [0,2\bar e]}\bar U^i(x).
\end{eqnarray*}
We have to show that $\bar x^i $ also maximizes $U^i$  on $\mathbb{B}(\psi,e^i)$. Suppose there  is  a $x\in 
\mathbb{B}(\psi,e^i)$ in the original  budget set, such that $ U^i(x)>U^i(\bar x^i)$. But then we have for some
 small $\lambda\in (0,1)$, $\lambda x +(1-\lambda) x^i \in  \mathbb{B}(\psi,e^i)\cap
 [0,2\bar e]$. The concavity of each $U^i$ gives us
\begin{eqnarray*}
U^i(\lambda x +(1-\lambda) x^i )\geq \lambda U^i(x) +(1-\lambda)U^i( x^i )>U^i( x^i ),
\end{eqnarray*}
a contradiction. Therefore, $(\bar x^1,\ldots,  \bar x^I,  \bar \psi)$ is  also an equilibrium in the original economy $\mathcal{E}^\mathbb{P}$. 
\end{enumerate}

\end{proof2}

%The following technical lemma is crucial to proof that the  fixed--point allocation  is indeed  feasible  (see step 6 in the proof Theorem \ref{MAIN}).
%\begin{lem}\label{neg} Let $x\in \mathbb{X}$. If $   \mathbb{E}\psi x\leq 0$ for all $\psi\in \mathbb{X}_+$  then $x\leq 0$ follows.
%\end{lem}
%\begin{proof}
% Since  $\mathbb{E}$ is relevant we have  $\mathbb{E}x>0$ for all $x\geq 0$ and $x\neq 0$.\footnote{ This holds also true for sublinear expectation which are generated by a set of mutually singular priors, $x\geq 0$ and $x\neq 0$ implies there is a $P$ with $P(x>0)>0$. {\color{blue}RELEVANT.} }

% Suppose there is some $\omega_k\in \Omega$ such that $x(\omega_k)>0$. Let $\psi^k=(0,\ldots,0,1,0,\ldots,0)\in \mathbb{X}_+\setminus \{0\}$ . Since $\langle\psi^k x\rangle_\mathbb{X}>0$  we have 
%$$ \mathbb{E} \psi^k x= \mathbb{E}x^k 1_{\{\omega_k\}}>\mathbb{E} 0=0$$and yields a contradiction.\end{proof}

\end{appendix}
 \bibliographystyle{econometrica}
\ifx\undefined\BySame
\newcommand{\BySame}{\leavevmode\rule[.5ex]{3em}{.5pt}\ }
\fi
\ifx\undefined\textsc
\newcommand{\textsc}[1]{{\sc #1}}
\newcommand{\emph}[1]{{\em #1\/}}
\let\tmpsmall\small
\renewcommand{\small}{\tmpsmall\sc}
\fi

% \bibliography{bib}

\end{document}